\theoremstyle{plain}\newtheorem{proposition}[theorem]{Proposition}
\theoremstyle{plain}\newtheorem{rrule}{Reduction Rule}
\theoremstyle{plain}
\newcommand{\np}{\mathsf{NP}}
\newcommand{\wone}{\mathsf{W[1]}}
\newcommand{\wtwo}{\mathsf{W[2]}}
\newcommand{\fpt}{\mathsf{FPT}}
\newcommand{\p}{\mathsf{P}}
\newcommand{\bigo}{\mathcal{O}}
\newcommand{\bigos}{\mathcal{O}^*}
\newcommand{\cnpp}{\mathsf{coNP/Poly}}
\renewcommand{\c}[0]{\mathcal{C}\xspace}
\renewcommand{\t}{\mathcal{T}\xspace}
\newcommand {\h}{\mathcal{H}\xspace}
\newcommand {\s}{\mathcal{S}\xspace}
\renewcommand {\u}{\mathcal{U}\xspace}
\newcommand {\nhs}{n_{\mathcal{H}}^*\xspace}
\newcommand{\ac}{A_{\c}}
\newcommand{\tw}{\h_{t}}
\newcommand{\lc}{\ell_{\c}\xspace}
\DeclareMathOperator{\col}{col}
\def \mcsT {\textsc{Maximum Colorful Subtree}\xspace}
\def \mcaT {\textsc{Maximum Colorful Arborescence}\xspace}
\def \mca {\textsc{MCA}\xspace}
\def \msc {\textsc{Multicolored Set Cover}\xspace}
\def \sc {\textsc{Set Cover}\xspace}
\title{On the Maximum Colorful Arborescence Problem and Color Hierarchy Graph Structure\footnote{This work was partially supported by PHC PROCOPE number 37748TL and by the DFG, project MAGZ (KO 3669/4-1)}}
\titlerunning{On the Maximum Colorful Arborescence Problem and Color Hierarchy Graph Structure}
\author{Guillaume Fertin}{LS2N UMR CNRS 6004, Universit\'e de Nantes, Nantes, France}{guillaume.fertin@univ-nantes.fr}{}{}
\author{Julien Fradin}{LS2N UMR CNRS 6004, Universit\'e de Nantes, Nantes, France}{julien.fradin@univ-nantes.fr}{}{}
\author{Christian Komusiewicz}{Fachbereich für Mathematik und Informatik, Philipps-Universität Marburg, Marburg, Germany}{komusiewicz@informatik.uni-marburg.de}{}{}
\authorrunning{G. Fertin, J. Fradin and C. Komusiewicz} 
\subjclass{F.2.2 Nonnumerical Algorithms and Problems, G.2.1 Combinatorics, G.2.2 Graph Theory}
\keywords{Subgraph problem, computational complexity, algorithms, fixed-parameter tractability, kernelization}
\begin{document}
\maketitle
\begin{abstract}
Let $G=(V,A)$ be a vertex-colored arc-weighted directed acyclic graph (DAG) rooted in some vertex $r$. The color hierarchy graph $\h(G)$ of $G$ is defined as follows: $V(\h(G))$ is the color set $\c$ of $G$, and $\h(G)$ has an arc from $c$ to $c'$ if $G$ has an arc from a vertex of color $c$ to a vertex of color $c'$. We study the \mcaT (MCA) problem, which takes as input a DAG $G$ such that $\h(G)$ is also a DAG, and aims at finding in $G$ a maximum-weight arborescence rooted in $r$ in which no color appears more than once. 
The \mca problem models the {\em de novo} inference of unknown metabolites by mass spectrometry experiments. 
Although the problem has been introduced ten years ago (under a different name), it was only recently pointed out that a crucial additional property in the problem definition was missing: by essence, $\h(G)$ {\em must be} a DAG. In this paper, we further investigate \mca under this new light and provide new algorithmic results for this problem, with a specific focus on fixed-parameter tractability ($\fpt$) issues for different structural parameters of $\h(G)$.
In particular, we show there exists an $\bigos(3^{\nhs})$ time algorithm for solving \mca, where $\nhs$ is the number of vertices of indegree at least two in $\h(G)$, thereby improving the $\bigos(3^{|\c|})$ algorithm from Böcker et al.~[Proc.~ECCB~'08].
We also prove that \mca is $\wtwo$-hard relatively to the treewidth $\tw$ of the underlying undirected graph of $\h(G)$, and further show that it is $\fpt$ relatively to $\tw + \lc$, where $\lc := |V| - |\c|$.
\end{abstract}

\section{Introduction}
Motivated by {\em de novo} inference of metabolites from mass spectrometry experiments, Böcker et al.~\cite{DBLP:conf/eccb/BockerR08} introduced the \mcsT problem, an optimization problem that takes as input a vertex-colored arc-weighted directed acyclic graph $G=(V,A)$ rooted in some vertex $r$, and asks for a maximum weighted arborescence in $G$ that contains $r$, and in which each color appears at most once. In this model, the root $r$ in $G$ represents the sought metabolite, any vertex in $G$ represents a molecule obtained from $r$ after (possibly several) fragmentation(s), and vertices are colored according to their masses. An arc connects two molecules (vertices) $u$ and $v$ when $v$ can be obtained from $u$ by fragmentation, and is assigned a weight that indicates the (possibly negative) degree of confidence that the fragmentation from $u$ to $v$ actually occurs. A maximum weighted arborescence from $G$ that contains $r$ and in which each color appears at most once thus represents a most plausible fragmentation scenario from $r$. Let $\h(G)$ be the following graph built from $G$: $V(\h(G))$ is the set $\c$ of colors used to color $V(G)$, and there is an arc from $c$ to $c'$ in $\h(G)$ if there is an arc in $G$ from a vertex of color $c$ to a vertex of color $c'$. We call $\h(G)$ the {\em color hierarchy graph} of $G$. Observe that $\h(G)$ {\em must be} a DAG since colors represent masses and fragmenting a molecule gives new molecules with lower mass. As recently pointed out~\cite{DBLP:conf/tamc/FertinFJ17}, the initial definition of \mcsT omits this crucial property of $G$.  This led Fertin et al.~\cite{DBLP:conf/tamc/FertinFJ17} to reformulate the initial \mcsT problem as follows.

\begin{center}
\fbox{\begin{minipage}{0.95\textwidth}
\textbf{\mcaT (\mca)}\\
\textbf{Input:} A DAG $G = (V,A)$ rooted in some vertex $r$, a set $\c$ of colors, a coloring function $\col : V \to \c$ such that $\h(G)$ is a DAG and an arc weight function $w : A \to \mathbb{R}$.\\
\textbf{Output:} A colorful arborescence $T = (V_{T}, A_{T})$ rooted in $r$ and of maximum weight $w(T) = \sum_{a \in A_{T}} w(a)$. 
\end{minipage}}
\end{center}

The study of \mca initiated in~\cite{DBLP:conf/tamc/FertinFJ17} essentially focused on the particular case where $G$ is an arborescence (\emph{i.e.} the underlying undirected graph of $G$ is a tree), and showed for example that \mca is $\np$-hard even for very restricted such instances. This work was also the first one to explicitly exploit that $\h(G)$ is a DAG. In particular, it was shown that if $\h(G)$ is an arborescence, then \mca is polynomially solvable. This latter promising result is the starting point of the present paper, in which we aim at better understanding the structural parameters of $\h(G)$ that could lead to fixed-parameter tractable ($\fpt$), \emph{i.e.} exact and moderately exponential, algorithms. As pointed out in a recent study~\cite{BoeckerArxiv18}, obtaining exact solutions instead of approximate ones is indeed preferable for \mca. Hence, improved exact algorithms are truly desirable for this problem.

\paragraph*{Related work and our contribution}
The \mca problem is $\np$-hard even when every arc weight is equal to 1~\cite{DBLP:conf/eccb/BockerR08} and highly inapproximable even when $G$ is an arborescence with uniform weights~\cite{DBLP:conf/tamc/FertinFJ17}. Moreover, \mca is  $\wone$-hard parameterized by $\lc =|V(G)| - |\c|$~\cite{DBLP:conf/tamc/FertinFJ17} (a consequence of Theorem 1 from~\cite{9410444820120101}). On the positive side, \mca can be solved  in $\bigo^*(3^{|\c|})$ time by dynamic programming~\cite{DBLP:conf/eccb/BockerR08}. Moreover, as previously mentioned, \mca is in $\p$ when $\h(G)$ is an arborescence~\cite{DBLP:conf/tamc/FertinFJ17}. This result can be extended to some arborescence-like color hierarchy graphs as \mca can be solved by a branching algorithm in time $\bigos(2^s)$~\cite{DBLP:conf/tamc/FertinFJ17} where $s$ is the minimum number of arcs of $\h$ whose removal turns $\h$ into an arborescence. Finally, a solution of \mca of order $k$ can be computed in $\bigos((3e)^k)$ time using the color-coding technique~\cite{512234419950701} in combination with dynamic programming~\cite{1093576520090101}.

A related pattern matching problem in graphs is \textsc{Graph Motif} where, in its simplest version, we are given an undirected vertex-colored graph and ask whether there is a connected subgraph containing one vertex of each color~\cite{lacroix,fellowsupper,BBFKN11,BKK16}. The main difference is that the graph does not contain edges of negative weight. As a consequence, \textsc{Graph Motif} is somewhat simpler and, in contrast to \mca, \textsc{Graph Motif} is fixed-parameter tractable for the parameter~$\lc$~\cite{BBFKN11,DBLP:conf/cpm/FertinK16}. 
Our results are summarized in Table~\ref{tab:results}. We focus on two parameters from $\h(G)$, namely its number $\nhs$ of vertices of indegree at least 2, and the treewidth $\tw$ of its underlying undirected graph. This choice is motivated by the fact that when $\h(G)$ is an arborescence, each of these two parameters is constant (namely, $\nhs=0$ and $\tw=1$) while \mca is in $\p$. Thus, our parameters measure the distance from this trivial case~\cite{GHN04}.  In addition, we consider the parameter $\lc := |V(G)| - |\c|$ which is the number of vertices that are not part of the solution even if the solution contains one vertex from each color, and $\ell\geq \lc$ which is the number of vertices that are not part of the solution. Together with $\fpt$ issues, we also address the (in)existence of polynomial problem kernels for these parameters. In a nutshell, we provide an almost complete dichotomy for fixed-parameter tractability and problem kernelization for these parameters. 

\begin{table}[t]
\footnotesize

\caption{Overview of the results for the \mca problem presented in this paper. Here, $\nhs$ is the number of vertices of indegree at least 2 in $\h$, $\tw$ is the treewidth of the underlying undirected graph of $\h$, $\lc := |V(G)| - |\c|$ and $\ell\geq \lc$ is the number of vertices that are not part of the solution.}
\centering
\begin{tabular}{|c||c|c|}
\hline
Parameter & $\fpt$ status & Kernel status \\ 
\hline
\hline
$\nhs$ & ~ $\bigos(3^{\nhs})$ (Th.~\ref{thm:fpt-in2}) ~ & ~ No poly. kernel (Prop.~\ref{prop:no-poly-kernel-c})~ \\
\hline
$\ell$ & \multicolumn{2}{c|}{$\wone$-hard (from \cite{9410444820120101})} \\ 
\hline
$\nhs + \lc$ & $\fpt$ (from Th.~\ref{thm:fpt-in2}) & No poly. kernel (Prop.~\ref{prop:no-poly-kernel-lc-in2}) \\
\hline
$\nhs + \ell$ & \multicolumn{2}{c|}{Poly. kernel (Th.~\ref{thm:kernel-l-in2})} \\
\hline
$\tw$ & \multicolumn{2}{c|}{$\wtwo$-hard (Prop.~\ref{prop:w2-tw})} \\
\hline
$\tw + \lc$ & $\bigos(2^{\lc}\cdot 4^{\tw})$ (Th.~\ref{thm:fpt-lc-twch}) & ? \\
\hline
\end{tabular}
\label{tab:results}
\end{table}
\paragraph*{Preliminaries}
In the following, let $G = (V,A)$ be the input graph of \mca, with $n_G := |V(G)|$. For any integer $p$, we let $[p] := \{1,\ldots,p\}$. For any vertex $v \in V$, $N^+(v)$ is the set of outneighbors of $v$. A set~$S \subseteq V$ (resp. a graph $G$) is  \textit{colorful} if no two vertices in $S$ (resp. in $G$) have the same color. Moreover, we say that a subgraph $G'$ of $G$ is \textit{fully-colorful} if it contains exactly one occurrence of each color from $\c$. The color hierarchy graph of $G$ is denoted $\h(G) :=(\c, \ac)$, or, when clear from the context, simply $\h$. For any instance of \mca, we define $\lc := n_G - |\c|$ and we denote by $\ell$ the number of vertices that are not part of the solution -- thus $\ell \geq \lc$.
We finally briefly recall the relevant notions of parameterized
algorithmics (see e.g.~\cite{DBLP:books/sp/CyganFKLMPPS15}). A parameterized problem is a subset of $\Sigma\times \mathbb{N}$ where the second component is the parameter. A parameterized problem is \emph{fixed-parameter tractable} if every instance~$(x,k)$ can be solved in~$f(k)\cdot |x|^{O(1)}$ time. A \emph{reduction to a problem
  kernel}, or \emph{kernelization}, is an algorithm that takes as
input an instance~$(x,k)$ of a parameterized problem $Q$ and produces in
polynomial time an equivalent (\emph{i.e.}, having the same solution) instance~$(x',k')$ of~$Q$ such that (i)~$|x'|\le g(k)$, and (ii)~$k'\le k$.  The
instance~$(x',k')$ is called \emph{problem kernel}, and~$g$ is called
the \emph{size of the problem kernel}. If~$g$ is a polynomial
function, then the problem admits a \emph{polynomial-size
  kernel}. Classes $\wone$ and $\wtwo$ are classes of presumed
fixed-parameter intractability: if a problem is
$\wone$-hard (resp. $\wtwo$-hard) for parameter~$k$, then it is generally assumed that it is not fixed-parameter tractable.

This paper is organized as follows. In Section~\ref{sec:in2}, we study in detail the impact of $\nhs$ on the parameterized complexity of the \mca problem, while in Section~\ref{sec:tw}, the same type of study is realized with parameter $\tw$. 
Due to lack of space, some proofs are deferred to the appendix.

\section{Parameterizing the \mca Problem by $\nhs$}
\label{sec:in2}
Two main reasons lead us to be particularly interested in $\nhs$. First, \mca is in $\p$ when $\h$ is an arborescence~\cite{DBLP:conf/tamc/FertinFJ17}, thus when $\nhs = 0$. Second, \mca can be solved in $\bigos(3^{|\c|})$ time~\cite{DBLP:conf/eccb/BockerR08}. Since by definition $\nhs \leq |\c|$, determining whether \mca is $\fpt$ with respect to $\nhs$ is of particular interest. We answer this question positively in Theorem~\ref{thm:fpt-in2}.
We first need some additional definitions. Let $X$ be the set of vertices of indegree at least two in $\h$ (thus $|X| = \nhs$) and call it the set of \textit{difficult colors}. For any $V'\subseteq V(G)$, $\col(V')$ denotes the set of colors used by $\col$ on the vertices in $V'$. Moreover, for any vertex $v \in V$ that has at least one outneighbor in $G$, assume that $\col(N^+(v))$ has an arbitrary but fixed ordering of its vertices. Therefore, for any $i \in [|\col(N^+(v))|]$, let $\col^+(v, i)$ denote the $i$-th color in $\col(N^+(v))$. Finally, for any arborescence $T$ in $G$, let $X(T)$ denote the set of difficult colors in $\col(T)$. We have the following lemma.

\begin{lemma}
\label{lem:in2-disjoint-colors}
In $\h$, for any $c \in \c$, any pair of distinct colors $c_1, c_2 \in N^+(c)$ and any disjoint sets $X_1, X_2 \subseteq X$, any arborescence $T_1$ rooted in $c_1$ such that $X(T_1) \subseteq X_1$ is disjoint from any arborescence $T_2$ rooted in $c_2$ such that $X(T_2) \subseteq X_2$.
\end{lemma}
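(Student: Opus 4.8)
The plan is to argue by contradiction, exploiting only the acyclicity of $\h$. Assume $T_1$ and $T_2$ are not disjoint and fix a color $d \in V(T_1) \cap V(T_2)$; I will produce a single difficult color that lies in both $X(T_1)$ and $X(T_2)$, which is impossible once we know $X(T_1) \subseteq X_1$, $X(T_2) \subseteq X_2$ and $X_1 \cap X_2 = \emptyset$.

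First I would record two structural facts. Since $c_1, c_2 \in N^+(c)$ and $\h$ is a DAG, neither $c_1$ nor $c_2$ can reach $c$, so $c \notin V(T_1) \cup V(T_2)$ and in particular the shared color $d$ satisfies $d \neq c$. Next, because $T_1$ is an arborescence rooted in $c_1$, there is a directed path from $c_1$ to $d$ lying inside $T_1$; prepending the arc $c \to c_1$ gives a directed path $P_1$ from $c$ to $d$ in $\h$, and symmetrically through $c_2$ a directed path $P_2$. As $c_1 \neq c_2$, the paths $P_1$ and $P_2$ leave $c$ along different arcs and are therefore distinct.

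The heart of the proof is to locate where $P_1$ and $P_2$ first reconverge. I would let $w$ be the vertex occurring earliest on $P_1$, strictly after $c$, that also lies on $P_2$; this is well defined since $d$ is common to both paths. Denoting the in-path predecessors of $w$ on $P_1$ and $P_2$ by $u$ and $v$ respectively, the crucial claim is $u \neq v$, which forces $w$ to receive the two distinct arcs $u \to w$ and $v \to w$ in $\h$, hence $w$ has indegree at least two and $w \in X$. This case analysis is the step I expect to require the most care: if $w$ is an internal vertex of $P_1$, then $u$ precedes $w$ on $P_1$ and by minimality of $w$ cannot lie on $P_2$, giving $u \neq v$ at once; the only boundary case is $w = c_1$, which forces $u = c$, and there one uses $c_1 \neq c_2$ to exclude $v = c$ (that would make $w = c_2 = c_1$), again yielding $u \neq v$.

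It then remains to assemble the contradiction. As $w$ lies on $P_1$ strictly after $c$, it belongs to the subpath from $c_1$ to $d$, which is contained in $T_1$, so $w \in V(T_1)$; by the same reasoning $w \in V(T_2)$. Combined with $w \in X$, this gives $w \in (V(T_1) \cap X) \cap (V(T_2) \cap X) = X(T_1) \cap X(T_2) \subseteq X_1 \cap X_2$, contradicting $X_1 \cap X_2 = \emptyset$. Hence no common color exists and $T_1$ and $T_2$ are disjoint. I anticipate that, once acyclicity has supplied the two divergent paths, everything is routine apart from the verification that the reconvergence vertex $w$ genuinely has two distinct in-neighbours.
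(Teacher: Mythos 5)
Your proof is correct and follows essentially the same strategy as the paper's: both arguments trace the two paths from $c$ through $c_1$ and $c_2$ to a hypothetical common color, locate the first point where they reconverge, and show that this vertex has two distinct in-neighbours and is therefore a difficult color lying in $X_1 \cap X_2$, a contradiction. Your version is marginally cleaner in that prepending the arc from $c$ lets a single ``first reconvergence vertex'' argument absorb the paper's separate case where $c_2$ itself lies on the path inside $T_1$.
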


\begin{proof}
Assume wlog that $\h$ does not contain any path from $c_2$ to $c_1$. If $T_1$ and $T_2$ are not disjoint then there exists $c^* \in \c$ such that $c^*$ belongs both to $T_1$ and $T_2$. In order to prove that such a color $c^*$ cannot exist, let $\tau_1$ (resp. $\tau_2$) be the set of colors on the path from $c_1$ (resp. $c_2$) to $c^*$ including $c_1$~in~$T_1$ (resp. $c_2$~in~$T_2$). Then, either $\tau_2 \subset \tau_1$ or $c_2 \notin \tau_1$. First, if $\tau_2 \subset \tau_1$, then there exists a vertex $c' \in \tau_1$ such that $c' \neq c$ with an arc $(c',c_2)$. Since there already exists an arc $(c, c_2)$, $c_2$ is thus a difficult color, a contradiction to the assumption that $X_1$ and~$X_2$ are disjoint. Second, if $c_2 \notin \tau_1$, then $|\tau_1 \cap \tau_2| \geq 1$ since $c^* \in \tau_1 \cap \tau_2$. Therefore, let $\bar{c} \in \tau_1 \cap \tau_2$ such that there exists a path from $\bar{c}$ to any other color of $\tau_1 \cap \tau_2$. By definition, the father of $\bar{c}$ in $\tau_1$ is different from the father of $\bar{c}$ in $\tau_2$, which means that $\bar{c}$ is a difficult color, and thus contradicts the assumption that $X_1$ and $X_2$ are disjoint. 
\end{proof}


\begin{theorem}
\label{thm:fpt-in2}
\mca can be solved in $\bigos(3^{\nhs})$ time and $\bigos(2^{\nhs})$ space.
\end{theorem}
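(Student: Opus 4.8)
The plan is to design a dynamic program over the vertices of $G$ that refines the $\bigos(3^{|\c|})$ algorithm of Böcker et al.\ by indexing states only by subsets of the difficult colors $X$ instead of by subsets of all of $\c$. Concretely, for every vertex $v \in V(G)$ with $c := \col(v)$ and every subset $S \subseteq X$, I would define $D[v][S]$ to be the maximum weight of a colorful arborescence $T$ rooted in $v$ in $G$ whose set of difficult colors is \emph{exactly} $X(T) = S$ (and $-\infty$ if no such arborescence exists). The sought optimum is then $\max_{S \subseteq X} D[r][S]$.

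To fill this table I would process the vertices in reverse topological order of $G$. For the recurrence at $v$, I would group the out-neighbours $N^+(v)$ by their color, obtaining the distinct out-colors $c'_1, \dots, c'_k \in N^+(c)$ of $c$ in $\h$. For each out-color $c'_j$ I would precompute $f_j[B] := \max\{\, w(v,u) + D[u][B] : (v,u)\in A,\ \col(u)=c'_j \,\}$, augmented by a ``skip'' option contributing weight $0$ to $f_j[\emptyset]$; this encodes choosing at most one child of color $c'_j$ together with the sub-arborescence hanging below it. Writing $\hat S := S \setminus \{c\}$ for the difficult colors contributed by the children, I would then combine the $f_j$ by a max-plus subset convolution over $2^X$, i.e.\ fold them one color at a time via $g_j[S] = \max_{A \uplus B = S} \{ g_{j-1}[A] + f_j[B] \}$, and set $D[v][S]$ to the resulting value at $\hat S$ (requiring $c \in S$ whenever $c \in X$, and $-\infty$ otherwise). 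The all-skip option recovers the singleton arborescence as the base case.

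The heart of the correctness proof is to show that this disjoint-partition convolution neither misses nor over-produces colorful arborescences. Completeness is immediate: in any colorful arborescence rooted in $v$, all colors — hence in particular all difficult colors — are pairwise distinct, so the difficult colors split into pairwise disjoint sets across the children's sub-arborescences, each of which is itself colorful and is thus accounted for by the corresponding $f_j$. Soundness is where Lemma~\ref{lem:in2-disjoint-colors} is essential: when we glue sub-arborescences hanging from children of two \emph{distinct} out-colors $c'_i \neq c'_j$ whose difficult-color sets $B_i, B_j$ are disjoint, the lemma guarantees that the two sub-arborescences are entirely color-disjoint, not merely disjoint on $X$; moreover, since $\h$ is a DAG and $c \to c'_j$ is an arc, no descendant of $c'_j$ in $\h$ equals $c$, so $c$ never reappears below $v$. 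Consequently every combination that the convolution considers really does yield a colorful arborescence, with weight and difficult-color set as recorded. I expect this soundness step to be the main obstacle, and it is exactly the point at which tracking only the difficult colors (rather than all colors) is justified.

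For the complexity, a single max-plus subset convolution over $2^X$ costs $\sum_{S \subseteq X} 2^{|S|} = 3^{\nhs}$ elementary operations, and folding in the $k$ out-colors of a vertex costs $O(k\cdot 3^{\nhs})$; summing over all vertices and accounting for reading the children tables only multiplies this by a polynomial factor, giving total running time $\bigos(3^{\nhs})$. For space, it suffices to keep the table $D$, which has $n_G \cdot 2^{\nhs}$ entries, together with the $\bigos(2^{\nhs})$-size auxiliary arrays used during a single convolution; hence the algorithm runs in $\bigos(2^{\nhs})$ space.
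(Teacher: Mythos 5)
Your proposal is correct and is essentially the paper's own proof: your tables $D$/$f_j$/$g_j$ correspond exactly to the paper's $A[v,X',i]$ and $B[v,X',i]$ (the paper tracks difficult-color sets as subsets rather than exact sets, a cosmetic difference), the children are likewise folded in one out-color at a time via a max-plus disjoint-subset combination costing $3^{\nhs}$, and soundness rests on Lemma~\ref{lem:in2-disjoint-colors} in the same way. No substantive difference in approach or in the complexity analysis.
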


\begin{proof}
We propose a dynamic programming algorithm which makes use of two programming tables.
The first one, $A[v, X', i]$, is computed for all $v \in V(G)$, $X' \subseteq X$ and $i \in \{0, \ldots , |\col(N^+(v))|\}$ and stores the weight of the maximum colorful arborescence $T_{A}(v, X', i)$ such that
\begin{itemize}
\item $T_{A}(v, X', i)$ is rooted in $v$,
\item $(X(T_{A}(v, X', i)) \setminus \{\col(v)\}) \subseteq X'$, and
\item $T_{A}(v, X', i)$ contains an arc $(v,u)$ only if $u \in N^+(v)$ and $\col(u) \in \underset{j \in [i]}{\cup}\col^+(v, j)$ in $\h$.
\end{itemize}
The second one, $B[v, X', i]$, is computed for all $v \in V$, $X' \subseteq X$ and $i \in [|\col(N^+(v))|]$ and stores the weight of the maximum colorful arborescence $T_{B}(v, X', i)$ such that
\begin{itemize}
\item $T_B(v, X', i)$ is rooted in $v$,
\item $(X(T_{B}(v, X', i)) \setminus \{\col(v)\}) \subseteq X'$, and
\item $T_B(v, X', i)$ contains an arc $(v,u)$ only if $u \in N^+(v)$ and $\col(u) = \col^+(v, i)$.
\end{itemize}
In a nutshell, $T_{A}(v, X', i)$ and $T_{B}(v, X', i)$ share the same root $v$ and the same allowed set of difficult colors $X'$ (disregarding $\col(v)$), but there cannot exist $u \in N^+(v)$ such that $(v,u) \in T_{A}(v, X', i-1)$ and $(v,u) \in T_{B}(v, X', i)$. We now show how to compute the two abovementioned tables.

\[
A[v, X', i] = 
\left\{ 
\begin{array}{l l}
  0 & \quad \text{if } i = 0 \\
  \underset{\forall X'' \subseteq X'}{\text{max}} \{ A[v, X'', i-1] + B[v, X' \setminus X'', i] \} & \quad \text{otherwise} \\ 
\end{array} 
\right.
\]

For an entry of type $A[v, X', i]$ with $i = 0$ recall that $T_A(v, X', i)$ can only contain $v$. Otherwise, observe that by definition there cannot exist any $u \in N^+(v)$ such that $u$ belongs both to $T_{A}(v, X'', i-1)$ and $T_{B}(v, X' \setminus X'', i)$. Therefore, Lemma~\ref{lem:in2-disjoint-colors} shows that $v$ is the only possible common vertex between $T_{A}(v, X'', i-1)$ and $T_{B}(v, X' \setminus X'', i)$ and thus that $T_{A}(v, X', i)$ is an arborescence. Moreover, for any $v \in V$ and any pair of vertices $\{u_1, u_2\} \subseteq N^+(v)$, Lemma~\ref{lem:in2-disjoint-colors} also shows that the color hierarchy graph of any arborescence rooted in $u_1$ is disjoint from the color hierarchy graph of any arborescence rooted in $u_2$, which proves that the combination of $A[v, X'', i-1]$ and $B[v, X' \setminus X'', i]$ is colorful. Finally, testing every combination of $X'' \subseteq X'$ ensures the correctness of the formula.

\[
B[v, X', i] = 
\left\{ 
\begin{array}{l l}
  \underset{\forall u~:~ \col(u) = \col^+(v,i) }{\text{max}} \{0, w(v,u) + A[u, X', |col(N^+(u))|] \} \\
  \phantom{0} \qquad \qquad \qquad \qquad \qquad \qquad \qquad \text{if $\col^+(v, i) \notin X$}\\
  
  \underset{\forall u~:~ \col(u) = \col^+(v,i) }{\text{max}} \{0, w(v,u) + A[u, X' \setminus \col(u), |\col(N^+(u))|] \} \\
  \phantom{0} \qquad \qquad \qquad \qquad \qquad \qquad \qquad \text{if $\col^+(v, i) \in X'$}\\ 
  
  0 \qquad \qquad \qquad \qquad \qquad \qquad \qquad \text{if $\col^+(v, i) \in X \setminus X'$}\\
\end{array} 
\right.
\]

For an entry of type $B[v, X', i]$, if $\col^+(v,i)$ is a difficult color which does not belong to $X'$, then $V(T_B(v, X', i)) = \{v\}$, and hence $B[v, X', i] = 0$. Otherwise, recall that $B[v, X', i]$ stores the weight of the maximum colorful arborescence rooted in a vertex $u \in N^+(v)$ which has color $\col^+(v,i)$ in addition to the weight $w(v,u)$. Therefore, computing the maximum colorful arborescences for any such $u$ and only keeping the best one if it is positive ensures the correctness of the formula. Finally, if $\col(u) \in X'$ then observe that $\col(u)$ cannot be contained a second time in $T_{B}(u, X', |\col(N^+(u))|)$ and must be removed from $X'$.

Recall that any DAG has a topological ordering of its vertices, \textit{i.e.} a linear ordering of its vertices such that for every arc $(u,v)$, $u$ appears before $v$ in this ordering. In Algorithm~\ref{algo1}, we show how, and in which order, to compute all the entries of both dynamic programming tables. For this, we consider the entries from last to first according to some topological ordering of $G$. The total running time derives from the fact that our algorithm needs $3^{\nhs}$ steps to compute $A[v, X', i]$ since a difficult color can be in $X''$, $X' \setminus X''$ or in $X \setminus X'$. 
\end{proof}

\begin{algorithm}[t]
\caption{\textsc{Computing the entries in tables $A$ and $B$}}
\label{algo1}
\begin{algorithmic}[t]
\footnotesize
\ForAll {$v \in V$ from last to first in some topological ordering of $G$}
    \ForAll {$X' \subseteq X$}
	  \ForAll {$i \in \{1, \ldots , |\col(N^+(v))|\}$}
		  \State Compute $B[v, X', i]$
	  \EndFor
    \EndFor
    \ForAll {$X' \subseteq X$}
	  \ForAll {$i \in \{0, \ldots ,|\col(N^+(v))|\}$}
		  \State Compute $A[v, X', i]$
	  \EndFor
    \EndFor
\EndFor
\end{algorithmic}
\end{algorithm}


Recall that a parameterized problem $Q$ is $\fpt$ with respect to a parameter~$k$ if and only if it has a kernelization algorithm for $k$~\cite{DBLP:series/txcs/DowneyF13}, but that such a kernel is not necessarily polynomial. In Proposition~\ref{prop:no-poly-kernel-c}, we prove that although  \mca is $\fpt$ relatively to $\nhs$ (as proved by Theorem~\ref{thm:fpt-in2}), \mca is unlikely to admit a polynomial kernel relatively to $\nhs$. For this, we use the or-cross composition technique which, roughly speaking, is a reduction that combines many instances of a problem into one instance of the problem $Q$. Hence, if an $\np$-hard problem admits an or-cross composition into a parameterized problem $Q$, then $Q$ does not admit any polynomial-size problem kernel (unless $\np \subseteq \cnpp$)~\cite{DBLP:journals/siamdm/BodlaenderJK14}. The or-cross composition we use actually shows that \mca is unlikely to admit a polynomial kernel relatively to $|\c|$, and consequently to $\nhs$.

\begin{proposition}
\label{prop:no-poly-kernel-c}
Unless $\np \in \cnpp$, \mca does not admit a polynomial kernel for parameter $|\c|$, and consequently for parameter $\nhs$, even if~$G$ is an arborescence.
\end{proposition}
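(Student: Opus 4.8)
The plan is to prove the statement by an OR-cross-composition, following the framework of~\cite{DBLP:journals/siamdm/BodlaenderJK14}, starting from \mca restricted to instances in which $G$ is an arborescence with unit arc weights, which is $\np$-hard~\cite{DBLP:conf/eccb/BockerR08,DBLP:conf/tamc/FertinFJ17}. A single composition will in fact handle both parameters simultaneously: since $\nhs \le |\c|$ holds for every instance, a construction whose output parameter $|\c|$ is bounded by $\mathrm{poly}(\max_j |I_j| + \log t)$ automatically bounds $\nhs$ by the same quantity, so ``no polynomial kernel for $|\c|$'' and ``no polynomial kernel for $\nhs$'' come out together. For the required polynomial equivalence relation I would declare two well-formed instances equivalent exactly when they have the same number of vertices and the same target weight $W$; because weights are unit, the relevant threshold satisfies $W \le n_G$, so any finite family of instances of size at most $n$ splits into only $O(n^2)$ classes, as needed (all malformed strings going into one junk class).

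Given $t$ equivalent instances $I_1,\dots,I_t$, the composed instance consists of two parts. The first is a \emph{selection gadget}: a complete binary tree of height $h=\lceil \log_2 t\rceil$ rooted at a fresh super-root $r$, in which every vertex at depth $i$ receives the \emph{same} fresh color $\gamma_i$, using only $h+1 = O(\log t)$ new colors. Each instance $I_j$ is attached below a distinct leaf by a single arc, and all gadget arcs get weight $0$. The point of reusing one color per level is that, since the sought arborescence is colorful, it can contain at most one vertex of each color $\gamma_i$; being rooted at $r$, its intersection with the gadget is therefore forced to be a single root-to-leaf path, which \emph{selects} exactly one instance $I_j$ that the solution may then extend into. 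As there are no arcs between distinct instances and gadget arcs carry weight $0$, the composed optimum equals $\max_j \mathrm{OPT}(I_j)$, so the composed instance reaches threshold $W$ iff some $I_j$ does; this is the desired OR behaviour, and the composed graph is again an arborescence, giving the ``even if $G$ is an arborescence'' clause.

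The delicate point, and the step I expect to be the main obstacle, is keeping $|\c|$ small: the content colors \emph{must} be shared across all $t$ instances (otherwise $|\c|$ grows linearly in $t$), yet naively identifying colors of instances with incompatible color hierarchies could introduce a cycle and violate the standing requirement that $\h(G)$ be a DAG. I would resolve this by first padding every instance with isolated dummy colors to a common count $n \le n_G$, and then relabelling the colors of each $I_j$ according to a topological ordering of $\h(G_j)$, identifying the $p$-th color of that ordering with a single global color $p\in\{1,\dots,n\}$. Since this is only a color renaming it preserves each instance and its optimum, and afterwards every arc of every $\h(G_j)$ points forward in the common order $1<2<\cdots<n$; hence the instance part of the merged color hierarchy graph is a subgraph of the transitive tournament on $\{1,\dots,n\}$ and is acyclic. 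Appending the selector path $\gamma_0\to\gamma_1\to\cdots\to\gamma_h$ together with the arcs from $\gamma_h$ into the (forward-ordered) instance colors keeps $\h(G)$ a DAG, because the fresh colors $\gamma_i$ have no incoming arcs from the content colors.

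It then remains only to check the bookkeeping: the total number of colors is $n + O(\log t) = \mathrm{poly}(\max_j |I_j| + \log t)$, so the construction is a valid OR-cross-composition into \mca parameterized by $|\c|$, and by $\nhs \le |\c|$ equally into \mca parameterized by $\nhs$. Applying~\cite{DBLP:journals/siamdm/BodlaenderJK14} then yields that \mca admits no polynomial kernel for $|\c|$, and consequently none for $\nhs$, unless $\np \subseteq \cnpp$ — and, since the output is always an arborescence, this holds even in that restricted case.
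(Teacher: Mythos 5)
Your proposal is correct and follows essentially the same route as the paper: an OR-composition that merges the color sets of all input instances via a topological-order relabelling (so that the union of the color hierarchy graphs remains a DAG), uses a shared-color selection gadget to force any colorful arborescence into a single instance, and then invokes $\nhs\le|\c|$ to transfer the lower bound. The only differences are cosmetic: the paper attaches the instances through a depth-two star whose middle vertices all share one color (adding just $2$ colors rather than your $O(\log t)$ tree levels), and it assumes a common, topologically ordered color set up front instead of spelling out the padding and the polynomial equivalence relation.
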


The proof uses the notion of or-cross composition, that we first formally define.

\renewcommand{\thetheorem}{2.7}
\begin{definition} (\cite{DBLP:journals/jcss/BodlaenderDFH09,DBLP:journals/siamdm/BodlaenderJK14})
\label{def:comp-algo}
A \emph{composition algorithm} for a parameterized problem $Q \in \Sigma \times \mathbb{N}$ is an algorithm that receives as input a sequence $(x_1,k),(x_2,k), \ldots,(x_t,k)$ with $(x_i,k) \in \Sigma \times \mathbb{N}$ for each $1 \leq i \leq t$, takes polynomial time in $\sum_{i=1}^{t}|x_i|+k$, and outputs $(y,k') \in \Sigma \times \mathbb{N}$ with $(y,k') \in Q$ iff $\exists_{1 \leq i \leq t}(x_i,k) \in Q$ and $k'$ is polynomial in $k$. 
\end{definition}

\begin{proof}
In the following, let $t$ be a positive integer. For any $i \in [t]$, let $G_i=(V_i, A_i)$ be the graph of an instance of \mca which is rooted in a vertex $r_i$ and assume that the $t$~instances are built on the same color set $\c'=\{c_1,\ldots, c_{|\c'|}\}$. Moreover, we assume wlog that~$(c_1,\ldots, c_{|\c'|})$ is a topological ordering of~$\mathcal{H}(G_i)$ for all~$i \in [t]$.
 
 We now show a composition of the $t$ instances of \mca into a new instance of \mca. Let $G  = (V, A)$ be the graph of such a new instance with $V = \{r\} \cup \{r'_i : i \in [t]\} \cup \{ V_i : i \in [t]\}$ and $A = \{ (r, r'_i) : i \in [t]\} \cup \{ (r'_i, r_i) : i \in [t] \} \cup \{ A_i : i \in [t]\}$. Here,~$r$ is a vertex not contained in any of the $t$ \mca instances and which has a path of length 2 towards the root $r_i$ of any graph $G_i$ ; thus $G$ is clearly a DAG. Let $\c$ be the color set of $G$, and let us define the coloring function on $V(G)$ as follows: the root $r$ is assigned a unique color $c_r \notin \c'$ ; all vertices of type $r'_i$ are assigned the same color $c_{r'} \notin (\c' \cup \{c_r\})$ ; all arcs of type $(r'_i, r_i)$ and $(r, r'_i)$ are given a weight of 0 ; the color (resp. weight) of all other vertices (resp. arcs) is the same in the new instance than in their initial instance. Clearly, $(G, \c, col, w, r)$ is a correct instance of \mca. Moreover, if $G_i$ is an arborescence for every $i \in [t]$, then $G$ is also an arborescence.
We now prove that there exists $i \in [t]$ such that $G_i$ has a colorful arborescence $T = (V_T, A_T)$ rooted in $r_i$ of weight $W > 0$ if and only if $G$ has a colorful arborescence $T' = (V_{T'}, A_{T'})$ rooted in $r$ and of weight $W > 0$.  

($\Rightarrow$) If there exists $i \in [t]$ s.t. $G_i$ has a colorful arborescence $T = (V_T, A_T)$ rooted in $r_i$ and of weight $W > 0$, then let $T' = (V_{T'}, A_{T'})$ with $V_{T'} = V_T \cup \{r, r'_i\}$ and $A_{T'} = A_T \cup \{(r, r'_i),(r'_i, r_i)\}$. Clearly, $T'$ is connected, colorful and of weight $W$. 

($\Leftarrow$) Suppose there exists a colorful arborescence $T' = (V_{T'}, A_{T'})$ rooted in $r$ in $G$ of weight $W > 0$. Since $T'$ is colorful and all vertices of type $r'_i$ share the same color, there cannot exist $i$ and $j$ in $[t]$, $v_i \in V_i$ and $v_j \in V_j$ such that both $v_i$ and $v_j$ belong to $T'$. Thus, let $i^*$ be the only index in $[t]$ such that $V_{i^*} \cap V_{T'} \neq \emptyset$ and let $T = (V_T, A_T)$ with $V_T = V_{T'} \setminus \{r, r'_{i^*} \}$ and $A_T = A_{T'} \setminus \{(r, r'_{i^*}), (r'_{i^*}, r_{i^*})\}$. Clearly, $T$ is connected, colorful and of weight $W$.

Now, notice that $|\c| = |\c'|+2$ and thus that we made a correct composition of \mca into \mca. Moreover, recall that \mca is $\np$-hard~\cite{DBLP:conf/tamc/FertinFJ17} and that $\nhs \leq |\c|$. As a consequence, \mca does not admit a polynomial kernel relatively to $|\c|$, and hence relatively to $\nhs$, even in arborescences, unless $\np \subseteq \cnpp$.  
\end{proof}

Recall that \mca can be solved in time $\bigos(2^s)$ where~$s$ is the minimum number of arcs needet to turn $\h$ into an arborescence~\cite{DBLP:conf/tamc/FertinFJ17}. Since $s <|\c|^2$, we have the following. 
\begin{corollary}
\label{cor:no-poly-kernel-s}
Unless $\np \in \cnpp$, \mca does not admit a polynomial kernel relatively to $s$, even if~$G$ is an arborescence.
\end{corollary}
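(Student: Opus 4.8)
The plan is to transfer the kernel lower bound of Proposition~\ref{prop:no-poly-kernel-c} from the parameter $|\c|$ to the parameter $s$, exploiting that $s$ is polynomially bounded in $|\c|$. The starting observation is that $\h$ has exactly $|\c|$ vertices and hence fewer than $|\c|^2$ arcs; since turning $\h$ into an arborescence is done purely by deleting arcs, we have $s < |\c|^2$. Thus $s$ is bounded by a polynomial in $|\c|$, and the goal is to show that a polynomial kernel for $s$ would entail a polynomial-size kernelization (compression) for $|\c|$, contradicting Proposition~\ref{prop:no-poly-kernel-c}.

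Concretely, I would argue by contradiction. Suppose \mca admits a kernelization for parameter $s$ that produces, from any instance, an equivalent instance of size at most $g(s)$ for some nondecreasing polynomial $g$. Applying this kernelization to an arbitrary instance and using $s < |\c|^2$, the output has size at most $g(s) \leq g(|\c|^2)$, which is polynomial in $|\c|$. Since the number of colors $|\c'|$ of the output is bounded by its total size, we obtain that $|\c'|$ is at most a polynomial in $|\c|$; that is, the same algorithm is a polynomial-size compression of \mca with respect to $|\c|$. Because this reasoning is preserved when restricting to the class of arborescences — on which Proposition~\ref{prop:no-poly-kernel-c} already rules out a polynomial kernel for $|\c|$ unless $\np \subseteq \cnpp$ — we reach the desired contradiction, so no polynomial kernel for $s$ can exist, even if $G$ is an arborescence.

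The argument is therefore a routine parameter-comparison: whenever one parameter is polynomially bounded by another, a polynomial kernel for the smaller parameter yields one for the larger, so a lower bound for the larger parameter propagates to the smaller one. The only points requiring care are the direction of this implication — one must verify that it is $s$ that is bounded by $|\c|$ and not the reverse — and the fact that the compression (rather than kernel in the strict sense of condition~(ii)) is exactly what the or-cross-composition of Proposition~\ref{prop:no-poly-kernel-c} forbids, so that the lower bound transfers. I expect no genuine obstacle beyond stating this transfer precisely in the compression setting and noting that the arborescence restriction is inherited verbatim.
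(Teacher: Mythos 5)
Your proposal is correct and follows exactly the paper's (one-line) argument: the authors derive the corollary from Proposition~\ref{prop:no-poly-kernel-c} solely via the observation that $s < |\c|^2$, so a polynomial kernel for $s$ would yield a polynomial compression for $|\c|$, which the or-cross-composition excludes. Your more careful spelling-out of the direction of the bound and of the compression framework is exactly the intended justification.
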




In the following, we use a different technique, called polynomial parameter transformation~\cite{DBLP:journals/tcs/BodlaenderTY11}, to show that \mca is also unlikely to admit a polynomial kernel relatively to $\nhs + \lc$, where $\lc=|V(G)|-|\c|$. 

\begin{proposition}
\label{prop:no-poly-kernel-lc-in2}
\mca does not admit any polynomial kernel relatively to $\nhs + \lc$ unless $\np \subseteq \cnpp$.
\end{proposition}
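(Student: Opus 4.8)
The plan is to prove Proposition~\ref{prop:no-poly-kernel-lc-in2} via a polynomial parameter transformation (PPT) from a problem that is already known not to admit a polynomial kernel under the relevant parameter, most naturally from \mca parameterized by $|\c|$ (or equivalently $\nhs$), whose kernel-hardness was just established in Proposition~\ref{prop:no-poly-kernel-c}. Recall that a PPT from a parameterized problem $P$ to a parameterized problem $Q$ is a polynomial-time reduction mapping an instance $(x,k)$ of $P$ to an equivalent instance $(x',k')$ of $Q$ with $k' \le p(k)$ for some polynomial $p$; combined with the fact that $P$ has no polynomial kernel, this transfers kernel-hardness to $Q$ (under $\np \not\subseteq \cnpp$). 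Since Proposition~\ref{prop:no-poly-kernel-c} rules out a polynomial kernel for \mca parameterized by $\nhs$ alone, the task here is to reduce it to an \mca instance in which \emph{both} $\nhs$ and $\lc$ are polynomially bounded in the original $\nhs$ (equivalently $|\c|$).

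First I would take the source instance $(G,\c,\col,w,r)$ with parameter $\nhs$ and rebuild a new \mca instance $(G',\c',\col',w',r')$ that is equivalent (same optimum weight, or same yes/no answer) but in which $\lc = |V(G')| - |\c'|$ is small. The obstacle is that in the source instance each color may label arbitrarily many vertices, so $\lc$ can be huge and uncontrolled, whereas the target parameter demands $\lc = O(\operatorname{poly}(\nhs))$. The natural fix is a color-expansion gadget: replace the many vertices of a given color by a single "combined" vertex, or introduce fresh colors so that the number of surplus vertices $|V(G')|-|\c'|$ becomes bounded. Concretely, one would assign to each vertex of $G$ essentially its own color in $G'$, so that $G'$ becomes (nearly) fully-colorful and $\lc$ drops to a constant, while re-encoding the original color constraints through the structure of $\h(G')$ so that the number of difficult colors $\nhs$ remains polynomially bounded in the original parameter.

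The heart of the argument — and the step I expect to be the main obstacle — is reconciling these two goals simultaneously: making $\lc$ small forces the vertices to carry (almost) distinct colors, but the colorfulness constraint of the original instance must still be enforced, and that enforcement is exactly what creates shared targets in $\h$, i.e.\ vertices of indegree at least two, which inflate $\nhs$. I would therefore design the gadget so that each original color $c$ is "guarded" by a single dedicated difficult color in $\h(G')$, through which all arcs representing uses of $c$ must route; this keeps $\nhs$ linear (or polynomial) in $|\c|$ while preventing a color from being used twice in any colorful arborescence. Verifying that $\h(G')$ remains a DAG, that $G'$ is itself a DAG rooted appropriately, and that solutions correspond exactly in both directions (so a colorful arborescence of weight $W$ in $G'$ yields one of weight $W$ in $G$ and conversely) is the technical core, and I would carry out the correspondence proof by explicitly describing how a solution arborescence in one instance is lifted to or projected onto the other, checking connectivity, colorfulness, and weight preservation at each step.

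Finally, I would confirm the parameter bound $\nhs' + \lc' \le p(\nhs)$ for a polynomial $p$ (with $\lc'$ bounded by a constant and $\nhs'$ linear in $|\c| = \nhs + O(1)$ after the composition of Proposition~\ref{prop:no-poly-kernel-c}), observe that the whole construction runs in polynomial time, and invoke the PPT framework of~\cite{DBLP:journals/tcs/BodlaenderTY11}: since the source problem admits no polynomial kernel unless $\np \subseteq \cnpp$, neither does \mca parameterized by $\nhs + \lc$.
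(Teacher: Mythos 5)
Your overall strategy (a polynomial parameter transformation into \mca parameterized by $\nhs+\lc$) is the right framework, but the reduction you propose has a genuine gap at exactly the step you flag as the ``technical core'', and that step does not go through. You want to start from \mca parameterized by $\nhs$ (Proposition~\ref{prop:no-poly-kernel-c}) and recolor the instance so that $\lc$ becomes constant while the colorfulness constraints are re-encoded via ``guard'' colors. The difficulty is that in \mca the \emph{only} mechanism that makes a set of $k$ vertices mutually exclusive is that they share a color, and a shared color class of size $k$ contributes exactly $k-1$ to $\lc$. If you give every original vertex its own color and attach to each vertex of original color $c$ a guard vertex of a fresh color $g_c$, you have simply relocated the repetition onto the guards: there are $|V_c|$ guard vertices of color $g_c$, so $\lc$ is unchanged. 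If instead you use a \emph{single} guard vertex per original color and force all uses of $c$ to route through it, the correspondence of solutions breaks, since that vertex has only one parent slot in an arborescence while the original vertices of color $c$ hang below different parents with different arc weights. Applied to the output of the or-cross composition of Proposition~\ref{prop:no-poly-kernel-c} --- where $\lc$ grows linearly with the number $t$ of composed instances --- no construction you describe brings $\lc$ down to a polynomial in $\nhs$ while preserving equivalence, so the required bound on $\nhs+\lc$ in the target instance is never actually established.

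The paper avoids this obstacle entirely by choosing a different source problem: it gives a polynomial parameter transformation from \textsc{Set Cover} parameterized by the universe size $|\u|$, which is known not to admit a polynomial kernel unless $\np\subseteq\cnpp$. The constructed instance is a three-level DAG (a root, one vertex per set, one vertex per element) in which \emph{every} vertex receives a unique color, so $\lc=0$ by construction, and the only colors of indegree at least two in $\h$ are element colors, so $\nhs\le|\u|$. This sidesteps the tension you identified rather than resolving it: colorfulness is vacuous in the constructed instance, and the hardness comes from the covering structure rather than from color repetition. To salvage your approach you would need a mutual-exclusion gadget for \mca that does not rely on color repetition, and it is not clear that one exists.
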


Since $\ell \geq \lc$, and in light of Proposition~\ref{prop:no-poly-kernel-lc-in2}, we aim at determining whether a polynomial kernel exists for \mca relatively to $\nhs + \ell$. We have the following theorem.
\begin{theorem}
\label{thm:kernel-l-in2}
\mca admits a problem kernel with $\bigo(\nhs\cdot \ell^2)$ vertices.
\end{theorem}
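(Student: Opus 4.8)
The plan is to design a few polynomial-time reduction rules and then bound the size of a reduced instance by $\bigo(\nhs\cdot \ell^2)$ vertices. I treat the problem in its decision form: given $(G,\c,\col,w,r)$, a target weight $W$ and an integer $\ell$, decide whether $G$ has a colorful arborescence $T=(V_T,A_T)$ rooted in $r$ with $w(T)\ge W$ and $|V(G)\setminus V_T|\le \ell$. I would start with two easy observations bounding the ``slack'' of any solution. Since $T$ is colorful it has at most $|\c|$ vertices, so a solution excluding at most $\ell$ vertices forces $n_G-\ell\le |\c|$, i.e.\ $\lc = n_G-|\c|\le \ell$; if this fails, the instance is a trivial \textsf{No}-instance and I reduce it to a constant-size one. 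Consequently $\sum_{c\in\c}(d_c-1)=n_G-|\c|\le \ell$, where $d_c$ is the number of vertices of color $c$, so at most $\ell$ vertices are duplicates. Likewise $T$ uses at least $n_G-\ell\ge |\c|-\ell$ distinct colors, so at most $\ell$ colors are left unused. Hence $n_G\le |\c|+\ell$, and it suffices to reduce the number of colors to $\bigo(\nhs\cdot \ell^2)$.

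When $\nhs=0$, $\h$ is an arborescence and \mca is polynomial-time solvable~\cite{DBLP:conf/tamc/FertinFJ17}, so I solve the instance directly and output a constant-size equivalent instance; I therefore assume $\nhs\ge 1$. For the main case I would decompose $\h$ along its $\nhs$ difficult colors $X$. Call a color \emph{branching} if it has out-degree at least two in $\h$, and call a maximal directed path $C$ of colors each having in-degree and out-degree exactly one in $\h$ a \emph{chain}; the endpoints of chains are exactly the difficult, branching, source and sink colors, together with the root color. Bounding $|\c|$ then splits into (i)~bounding the length of every chain by $\bigo(\ell)$ and (ii)~bounding the number of chains by $\bigo(\nhs\cdot \ell)$.

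For (i) I would use a contraction rule together with a clean-up rule deleting vertices unreachable from $r$. Along a chain a color can only be reached through its unique predecessor and can only be left towards its unique successor, so the colors of $C$ that are used by any solution form a \emph{prefix} of $C$; the dropped suffixes are excluded vertices and hence total at most $\ell$ over the whole instance. If $|C|>\ell+1$, its first $|C|-\ell$ colors are used whenever $C$ is entered at all, so I contract this forced prefix into a single arc carrying the sum of the corresponding weights, adjusting $W$ and the bookkeeping of $\ell$ while preserving equivalence; this is simply the fact that a forced subpath contributes a fixed amount and need not be represented color by color. Lemma~\ref{lem:in2-disjoint-colors} guarantees that the subtrees hanging off distinct out-neighbours are color-disjoint, which is what makes the prefix structure and the weight bookkeeping consistent. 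After exhaustive application every chain has $\bigo(\ell)$ colors, and by the first paragraph the $\le\ell$ duplicate vertices need no separate rule.

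Step (ii) is where I expect the main difficulty. The chain endpoints are the $\nhs$ difficult colors, the sources, and the branching and sink colors, and the latter are not controlled by $\nhs$ alone (a deep binary tree of easy colors would create exponentially many chains), so the reduction rules must show that all but $\bigo(\nhs\cdot \ell)$ of these branching and sink features are \emph{forced} and removable. The plan is a charging argument on a budget-$\ell$ solution $T$: every leaf of $\h(T)$ that is not a sink of $\h$ has an $\h$-out-neighbour that is either an unused color (at most $\ell$ of these) or a color used under a different parent, hence of in-degree at least two in $\h$ and thus difficult (at most $\nhs$ of these); every branching of the arborescence $\h(T)$ can in turn be charged to a leaf. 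The forced sinks and the branches present in \emph{every} optimal budget-respecting solution are contracted away by the rule of step~(i) and by pruning never-beneficial appendages, leaving only $\bigo(\nhs+\ell)$ genuine leaves and hence $\bigo(\nhs+\ell)$ chains. Combining (i) and (ii) yields at most $\bigo((\nhs+\ell)\cdot \ell)=\bigo(\nhs\cdot \ell^2)$ colors for $\nhs,\ell\ge 1$, and therefore $\bigo(\nhs\cdot \ell^2)$ vertices after adding the $\le\ell$ duplicate and unused-color vertices. The hard part, on which I would spend most effort, is exactly proving that the branching and leaf structure of the reduced instance is charged correctly to the $\nhs$ difficult colors and the $\le\ell$ excludable vertices, and that the contraction of forced prefixes remains faithful despite the global, knapsack-like coupling introduced by the single budget $\ell$.
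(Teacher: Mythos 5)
Your step~(ii) has a genuine gap, and it is exactly where the main idea of the paper's proof lives. Your charging argument bounds only the leaves of $\h(T)$ that are \emph{not} sinks of $\h$; the leaves that \emph{are} sinks of $\h$ are left uncharged, and you defer them to ``pruning never-beneficial appendages'' without giving a rule. This cannot work as stated: take $\h$ to be a complete binary out-tree plus one extra arc (so $\nhs=1$), one vertex per color, all weights positive. Then $\ell=0$, every sink of $\h$ is a genuine leaf of the optimal solution, and there are exponentially many sinks and branching colors, hence exponentially many of your chains --- yet none of these appendages is ``never beneficial'', so they cannot be pruned; they all belong to the solution. The missing ingredient is the paper's Reduction Rule~\ref{rule:autonomous-color}: identify \emph{autonomous} colors $c$, i.e.\ colors whose reachable cone $\h^+(c)$ is an arborescence entered only through $c$; solve the corresponding sub-instance of \mca optimally in polynomial time (possible precisely because its color hierarchy is an arborescence), fold the optimal value $w(T_v)$ into the weight of every arc entering each $v\in V_c$, and delete everything below $V_c$. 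This \emph{evaluates} the appendages rather than pruning them, and only after its exhaustive application can one argue that every remaining sink of $\h$ is a difficult color, which is what makes the leaf count of the easy-color forest chargeable to $\nhs$.

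Two secondary points. First, your bulk contraction of a forced chain prefix into ``a single arc carrying the sum of the corresponding weights'' is under-specified: colors on a chain may have several vertices, so the contribution of the prefix is not a fixed sum but depends on which vertices are chosen, and deleting a block of vertices silently changes what the budget $\ell$ counts. The paper's Rule~\ref{rule:unique-inneighbor} avoids both problems by removing one color at a time, introducing bypass arcs weighted by the maximum-weight path $\pi(v_1,v_3)$ together with a dummy vertex $v^*$ of color $c_3$ to represent solutions that stop inside the chain. Second, you have no analogue of Rule~\ref{rule:many-inneighbors} bounding the number of unique-color in-neighbors of a vertex by $\ell+1$; the paper needs this to bound the in-degree of every color in $\h$ by $(\ell+1)^2+\ell$, which is where the $\ell^2$ factor in the kernel size actually comes from. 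Your preliminary observations ($\lc\le\ell$, $n_G\le|\c|+\ell$, the $\nhs=0$ case) and the prefix structure of used colors along a chain are correct, but without the autonomous-color rule the proof does not go through.
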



\section{Parameterizing the \mca Problem by $\tw$}
\label{sec:tw}
Let $U(\h)$ denote the underlying undirected graph of $\h$. In this section, we are interested in parameter $\tw$, defined as the treewidth of $U(\h)$. Indeed, since \mca is in $\p$ whenever $\h$ is an arborescence~\cite{DBLP:conf/tamc/FertinFJ17}, it is natural to study whether \mca parameterized by $\tw$ is $\fpt$. To do so, we first introduce some definitions. 

\begin{definition} 
\label{def:tree-decomposition}
Let $G = (V,E)$ be a undirected graph. A tree decomposition of $G$ is a pair $\langle \{X_i : i \in I\}, \t \rangle$, where $\t$ is a tree whose vertex set is $I$, and each $X_i$ is a subset of $V$, called a \emph{bag}. The following three properties must hold: \\
1) $\cup_{i \in I} X_i = V$~; \\
2) For every edge $(u,v) \in E$, there is an $i \in I$ such that $\{u,v\} \subseteq X_i$~; \\
3) For all $i,j,k \in I$, if $j$ lies on the path between $i$ and $k$ in $\t$, then $X_i \cap X_k \subseteq X_j$. 
\end{definition}

The \emph{width} of $\langle \{X_i : i \in I\}, \t \rangle$ is defined as  $\textnormal{max}\{|X_i| : i \in I\}-1$, and the \emph{treewidth} of $G$ is the minimum $k$ such that $G$ admits a tree decomposition of width $k$.

\begin{definition} 
\label{def:nice-tree-decomposition}
A tree decomposition $\langle \{X_i : i \in I\}, \t \rangle$ is called \emph{nice} if the following conditions are satisfied:\\
1) Every node of $\t$ has at most two children~;\\
2) If a node $i$ has two children $j$ and $k$, then $X_i = X_j = X_k$ and in this case, $X_i$ is called a \textsc{Join Node}~;\\
3) If a node $i$ has one child $j$, then one of the following situations must hold:\\
$~~~$ a) $|X_i| = |X_j| +1$ and $X_j \subset X_i$ and in this case, $X_i$ is called an \textsc{Introduce Node}, or \\
$~~~$ b) $|X_i| = |X_j| -1$ and $X_i \subset X_j$ and in this case, $X_i$ is called a \textsc{Forget Node}\\
4) If a node $i$ has no child, then $|X_i| = 1$ and in this case, $X_i$ is called a \textsc{Leaf Node}.
\end{definition}

We first show in the next proposition that \mca is unlikely to be $\fpt$ with respect to parameter $\tw$.

\begin{proposition}
\label{prop:w2-tw}
\mca is $\wtwo$-hard relatively to $\tw$.
\end{proposition}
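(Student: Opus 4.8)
The plan is to prove $\wtwo$-hardness of \mca parameterized by $\tw$ via a parameterized reduction from a canonical $\wtwo$-hard problem, namely \ds parameterized by the size $k$ of the dominating set, or equivalently from \sc parameterized by the solution size. The intuition comes directly from the structural role of $\tw$: since a vertex of $\h$ of large indegree (which corresponds to a ``difficult color'' that can be reached via many colors) is precisely what makes \mca hard, and since large indegree does \emph{not} by itself force large treewidth, we should be able to encode a covering problem in which the treewidth of $U(\h)$ stays bounded by a function of $k$ while the instance size grows freely.

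First I would set up the reduction as follows. Given a \sc instance with universe $\u = \{e_1, \dots, e_m\}$, a family of sets $S_1, \dots, S_p$, and target $k$, I would build a DAG $G$ together with a coloring so that $\h(G)$ has a small ``core'' of $\bigo(k)$ or $\bigo(m)$ many heavily-connected colors and a large ``periphery'' of low-degree colors encoding the sets. The root $r$ branches (through $k$ parallel gadgets, one per ``slot'' of the cover) into set-gadgets; choosing a set $S_j$ in a slot corresponds to including a colorful sub-arborescence whose colors hit exactly the universe-colors $\{c_{e} : e \in S_j\}$. Each universe element $e_i$ is represented by a distinct color $c_{e_i}$, and colorfulness of the solution arborescence enforces that no element-color is used twice, so the $k$ chosen sets must be pairwise disjoint on the covered elements; conversely weights are set up (e.g.\ each set-gadget contributes a fixed positive reward, all other arcs weight $0$) so that a maximum-weight colorful arborescence of a prescribed weight exists iff $k$ sets cover $\u$. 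The key point to engineer carefully is that $\h(G)$ is a DAG (masses strictly decrease along arcs), which I would guarantee by layering the construction: $r$'s color at the top, then slot-colors, then set-selector colors, then the element-colors $c_{e_i}$ at the bottom, with all arcs pointing downward in this layering.

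The crux of the argument, and the step I expect to be the main obstacle, is controlling the treewidth $\tw$ of $U(\h)$. The element-colors $c_{e_1}, \dots, c_{e_m}$ will each be pointed to by many set-colors, so they have large indegree, but I must ensure they do not collectively create a large grid-like minor or a large clique in $U(\h)$. The natural way to bound $\tw$ is to make the bipartite-like incidence structure between set-colors and element-colors have bounded treewidth independent of $m$ and $p$, for instance by arranging that the whole of $\h(G)$ becomes, after deleting a set of $\bigo(k)$ ``hub'' vertices, a graph of constant treewidth (a forest or a union of small components). Then a tree decomposition is obtained by inserting those $\bigo(k)$ hubs into every bag of a width-$\bigo(1)$ decomposition of the remainder, yielding $\tw = \bigo(k)$. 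I would therefore design the selector gadgets so that all the ``crossing'' happens through a bounded number of guard colors, and then argue $\tw \le f(k)$ by exhibiting such a decomposition explicitly. Since the reduction runs in polynomial time and maps parameter $k$ to a parameter $\tw \le f(k)$, this is a valid parameterized reduction, and because \sc (equivalently \ds) is $\wtwo$-hard for $k$, \mca is $\wtwo$-hard for $\tw$.

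Finally I would verify correctness of the equivalence in both directions, checking that colorfulness precisely captures the disjoint-hitting requirement and that the layered coloring indeed yields a DAG $\h(G)$ so that the constructed instance is a legal \mca input. The weight bookkeeping is routine once the gadget rewards are fixed, so the technical heart of the proof really lies in the simultaneous guarantee that (i) the reduction is correct, (ii) $\h(G)$ is acyclic, and (iii) the treewidth is bounded by a function of $k$ alone; I expect most of the writing effort to go into making (iii) rigorous through an explicit bounded-width tree decomposition.
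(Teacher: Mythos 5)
Your overall strategy --- a parameterized reduction from a covering problem, a layered DAG so that $\h(G)$ is acyclic, and a treewidth bound obtained by exhibiting a small set of hub colors covering all edges of $U(\h)$ --- is the right skeleton, and it matches the paper's proof in outline. However, the one step you explicitly defer (``design the selector gadgets so that all the crossing happens through a bounded number of guard colors'') is precisely the step that carries the entire proof, and your sketch does not contain the idea needed to make it work. If every set $S_j$ is represented by its own color, the bipartite part of $U(\h)$ between set-colors and element-colors is the incidence graph of an arbitrary set system and has unbounded treewidth; if all sets share one color, colorfulness lets you select only one set. The paper resolves this tension by reducing from $k$-\msc (which is $\wtwo$-hard for $k$): the $p$ set-vertices at level two are colored with only the $k$ colors of the \msc instance, so every edge of $U(\h)$ is incident either to one of these $k$ set-colors or to the root color. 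Hence $U(\h)$ has a vertex cover of size $k+1$, which immediately gives $\tw \le k+1$ via the trivial decomposition with $\col(V_2)$ in every bag. This exploits exactly the vertex-versus-color distinction you allude to (many vertices, few colors), but your $k$ ``slot'' gadgets as described do not achieve it, since within a slot you still need $p$ distinguishable alternatives, each routed to the correct element-colors.

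A secondary issue: you claim that colorfulness forces the $k$ chosen sets to be ``pairwise disjoint on the covered elements.'' If your gadget for selecting $S_j$ forces the inclusion of all colors $\{c_e : e \in S_j\}$, then overlapping sets can never be chosen together and you have in effect reduced from an exact-cover variant rather than from \sc, which breaks the equivalence. The paper avoids this by giving each element a single vertex $z_j$ with a unique color and an \emph{optional} incoming arc of weight $p$ from every set-vertex whose set contains it; the arborescence simply picks one parent per covered element, overlaps are harmless, and the target weight $pq-k$ (with $-1$ on each root-to-set arc) forces all $q$ elements to be reached using exactly $k$ sets. You would need to restate your weight scheme along these lines for both directions of the equivalence to hold.
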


\begin{proof}
We reduce from the $k$-\msc problem, which is defined below.

\begin{center}
\fbox{\begin{minipage}{0.95\textwidth}
\textbf{$k$-\msc}\\
\textbf{Input:} A universe $\mathcal{U} = \{u_1, u_2, \ldots, u_q\}$, a family $\mathcal{F} = \{S_1, S_2, \ldots, S_p\}$ of subsets of $\mathcal{U}$, a set of colors $\Lambda$
with a coloring function $\col' : \mathcal{F} \to \Lambda$, an integer $k$.\\
\textbf{Output:} A subfamily $\s \subseteq \mathcal{F}$ of sets whose union is $\mathcal{U}$, and such that (i)~$|\s|=k$ and (ii)~$\s$ is colorful, \textit{i.e.} $\col'(S_i)\neq col'(S_j)$ for any $i\neq j$ such that $S_i, S_j \in \s$. 
\end{minipage}}
\end{center}

The reduction is as follows: for any instance of $k$-\msc, we create a three-level DAG $G = (V = V_1 \cup V_2 \cup V_3, A)$ with $V_1 = \{r\}$, $V_2 = \{ v_i : i \in [p] \}$ and $V_3 = \{ z_j : j \in [q] \}$. Informally, we associate a vertex at the second level to each set of $\mathcal{F}$ and a vertex at the third level to each element of $\mathcal{U}$. We then add an arc of weight $-1$ from $r$ to each vertex at level 2 and an arc of weight $p$ from $v_i$ to $z_j$, for all $i \in [p]$ and $j \in [q]$ such that $u_j\in S_i$. Now, our coloring function $\col$ is as follows: we give a unique color to each vertex in $V_1 \cup V_3$, while at the second level (thus in $V_2$), two vertices of type $v_i$ are assigned the same color if and only if their two associated sets are assigned the same color by $\col'$. Notice that $\h$ is also a three-levels DAG with resp. $\col(V_1)$, $\col(V_2)$ and $\col(V_3)$ at the first, second and third levels. Therefore, $(G, \c, col, w, r)$ is a correct instance of \mca.
We now prove that there exists a colorful set $\s \in \mathcal{F}$ of size $k$ whose union is $\mathcal{U}$ if and only if there exists a colorful arborescence $T$ in $G$ of weight $w(T) = pq-k$. 

$(\Rightarrow)$ Suppose there exists a colorful set $\s \in \mathcal{F}$ of size $k$ whose union is $\mathcal{U}$ and let $\texttt{True} = \{i \in [p] : S_i \in \s \}$. Let $V_T = \{r\} \cup \{v_i : i \in \texttt{True} \} \cup \{ z_j : j \in [q] \}$. Necessarily, $G[V_T]$ is connected: first, $r$ is connected to every level-2 vertex ; second, a vertex $z_j$ corresponds to an element $u_j$ which is contained in some set $S_i \in \s$. Now, let $T$ be a spanning arborescence of $G[V_T]$. Clearly, $T$ is colorful and of weight $pq-k$. 

($\Leftarrow$) Suppose there exists a colorful arborescence $T=(V_T, A_T)$ in $G$ of weight $w(T) = pq-k$. Notice that any arborescence $T'$ in $G$ which contains $r$ and at least one vertex from $V_3$ must contain at least one vertex from $V_2$ in order to be connected. Therefore, if such an arborescence $T'$ does not contain one vertex of type $z_j$, then $w(T') < pq-p-1$ and $w(T') < w(T)$. Hence, if $w(T) = pq-k$ then $T$ necessarily contains each vertex from $V_3$, and thus contains exactly $k$ vertices from $V_2$. Now, let $\s = \{ S_i : i \in [p]~s.t.~v_i \in V_T\}$ and notice that $\s$ is a colorful subfamily of size $k$ whose union is $\mathcal{U}$ as all vertices of the third level belong to $T$. Our reduction is thus correct.

Now, recall that $\h$ is a three-levels DAG with resp. $\col(V_1)$, $\col(V_2)$ and $\col(V_3)$ at the first, second and third levels. Thus, there exists a trivial tree decomposition $\langle \{X_i : i \in [|\col(V_3)|+2]\}, \t \rangle$ of $U(\h)$ which is as follows: the bag $X_0 = \{col(r)\}$ has an arc towards the bag $X_1 = \{ \{\col(r)\} \cup \col(V_2)\}$ and, for any $i \in [|col(V_3)|]$, there exists an arc from $X_1$ to $X_i$ where each $X_i$ contains $\col(V_2)$ and a different vertex of $\col(V_3)$. Consequently, the width of $\langle \{X_i : i \in [|col(V_3)|+2]\}, \t \rangle$ is $k$, and hence $\mca$ is $\wtwo$-hard parameterized by $\tw$ as $k$-\msc is well-known to be $\wtwo$-hard parameterized by~$k$. 
\end{proof}

We now use the above proof to show that \mca is unlikely to admit $\fpt$ algorithms relatively for different further parameters related to $\h$. The vertex-cover number of $U(\h)$ is the size of a smallest subset $S \subseteq V(\h)$ such that at least one incident vertex of any arc of $\h$ belongs to $S$. Notice that $\col(V_2)$ is a vertex cover of $U(\h)$ and thus $U(\h)\le k$. The \emph{feedback vertex set} number is the size of a smallest subset $S \subseteq \h$ whose removal makes $U(\h)$ acyclic. The size of such a subset $S$ is an interesting parameter as $\nhs = 0$ in $\h[V(\h)\setminus S]$ and any vertex cover of $U(\h)$ is also a feedback vertex set of $U(\h)$ -- hence, $\col(V_2)$ is also a feedback vertex set of $U(\h)$. Altogether, we thus obtain the following corollary. 

\begin{corollary}
\label{cor:msc-implies-vc}
\mca is $\wtwo$-hard relatively to the vertex-cover number of $U(\h)$, and relatively to the feedback vertex set number of $U(\h)$. 
\end{corollary}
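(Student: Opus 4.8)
The plan is to obtain both hardness results directly from the reduction constructed in the proof of Proposition~\ref{prop:w2-tw}, by observing that the very same instance already certifies small values of the two parameters in question. Recall that this reduction maps an instance of $k$-\msc to an instance of \mca whose color hierarchy graph $\h$ is a three-level DAG, with color layers $\col(V_1)$, $\col(V_2)$ and $\col(V_3)$. Since \mca is shown $\wtwo$-hard for $\tw$ via a polynomial-time reduction that keeps the parameter bounded by a function of $k$, it suffices to verify that the vertex-cover number and the feedback vertex set number of $U(\h)$ are likewise bounded by a function of $k$ on every produced instance; the $\wtwo$-hardness then transfers along the same reduction.

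First I would argue that $\col(V_2)$ is a vertex cover of $U(\h)$. Indeed, because $\h$ is a three-level DAG in which no arc skips a level, every arc of $\h$ either goes from the single color in $\col(V_1)$ to a color in $\col(V_2)$, or from a color in $\col(V_2)$ to a color in $\col(V_3)$. Hence each arc of $\h$ is incident to a vertex of $\col(V_2)$, so $\col(V_2)$ covers all arcs. Moreover, since two level-$2$ vertices share a color exactly when their associated sets share a color under $\col'$, we have $|\col(V_2)| \le |\Lambda| = k$, and therefore the vertex-cover number of $U(\h)$ is at most $k$. As this is bounded by a function of the source parameter and the reduction runs in polynomial time, $k$-\msc parameterized by $k$ FPT-reduces to \mca parameterized by the vertex-cover number of $U(\h)$, which yields the first claim.

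For the feedback vertex set number, I would invoke the standard fact that any vertex cover is also a feedback vertex set: after deleting a vertex cover $S$ from an undirected graph, the remaining vertices form an independent set, which induces an edgeless (hence acyclic) graph. Applying this to $S = \col(V_2)$ shows that $\col(V_2)$ is a feedback vertex set of $U(\h)$, so the feedback vertex set number is at most the vertex-cover number, which is at most $k$. The same reduction therefore also witnesses $\wtwo$-hardness for the feedback vertex set number of $U(\h)$.

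I do not expect any genuine obstacle here, as both statements follow structurally from the layered shape of $\h$ already established in Proposition~\ref{prop:w2-tw}. The only point requiring care is to confirm that the two parameters are bounded by a function of $k$ (rather than growing with $p$ or $q$), so that the reduction qualifies as a valid FPT reduction and $\wtwo$-hardness is preserved; once $|\col(V_2)| \le k$ is checked, the remaining steps are immediate.
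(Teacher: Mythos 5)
Your proposal is correct and follows essentially the same route as the paper: both observe that $\col(V_2)$ is a vertex cover of $U(\h)$ of size at most $k$ in the instance produced by the reduction of Proposition~\ref{prop:w2-tw}, and that any vertex cover is also a feedback vertex set, so the $\wtwo$-hardness transfers to both parameters. Your write-up is in fact slightly more careful than the paper's, since you explicitly justify why every arc of $\h$ is incident to $\col(V_2)$ and why the parameter bound makes the reduction a valid parameterized reduction.
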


Next, recall that, in proof of in Proposition~\ref{prop:w2-tw}, each color from the third level of $\h$ is a leaf. Hence, the number of colors of outdegree at least $2$ in $\h$ is $|\col(V_1)|+|\col(V_2)|=k+1$. Although Theorem~\ref{thm:fpt-in2} showed that \mca is $\fpt$ relatively to $\nhs$, we obtain the following.

\begin{corollary}
\label{cor:msc-implies-out2}
\mca is $\wtwo$-hard relatively to the number of colors of outdegree at least $2$ in $\h$.
\end{corollary}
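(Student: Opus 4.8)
The plan is to observe that the very reduction built in the proof of Proposition~\ref{prop:w2-tw} already witnesses this stronger statement, so I would reuse it verbatim and only re-examine the parameter it induces. Recall that this reduction turns an instance of $k$-\msc into an instance of \mca in polynomial time, and that the resulting color hierarchy graph $\h$ is a three-level DAG whose levels are $\col(V_1)$, $\col(V_2)$ and $\col(V_3)$, with every arc pointing from one level to the next. Since a parameterized reduction only additionally requires the target parameter to be bounded by a function of the source parameter $k$, the whole proof reduces to counting the colors of outdegree at least $2$ in $\h$.

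The key step is to note that every color of $\col(V_3)$ is a sink of $\h$: the vertices of $V_3$ have no outgoing arc in $G$, hence their colors have no outgoing arc in $\h$ and have outdegree $0$. Therefore any color of outdegree at least $2$ must belong to $\col(V_1)\cup\col(V_2)$. As $|\col(V_1)|=1$ and $|\col(V_2)|=k$, this bounds the number of colors of outdegree at least $2$ in $\h$ by $k+1$, which is a function of $k$ alone.

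Combining the two observations, the construction of Proposition~\ref{prop:w2-tw} is a parameterized reduction from $k$-\msc, parameterized by $k$, to \mca, parameterized by the number of colors of outdegree at least $2$ in $\h$; since $k$-\msc is $\wtwo$-hard for $k$, the claimed $\wtwo$-hardness follows. The only point that needs care --- and the main (admittedly mild) obstacle --- is the identity $|\col(V_2)|=k$: it relies on the $\wtwo$-hard formulation of $k$-\msc in which the sets are partitioned into exactly $k$ color classes and one set per class is sought. I would make this assumption explicit, and stress that only the upper bound $k+1$ is needed, so the fact that empty or singleton sets may make some second-level colors have outdegree less than $2$ is harmless.
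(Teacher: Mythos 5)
Your proposal is correct and matches the paper's own argument: it reuses the reduction from Proposition~\ref{prop:w2-tw} and observes that all third-level colors are sinks in $\h$, so the colors of outdegree at least $2$ lie in $\col(V_1)\cup\col(V_2)$ and number at most $k+1$. Your extra remark about the multicolored formulation guaranteeing $|\col(V_2)|=k$ is a sensible clarification, but the approach is the same.
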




By Proposition~\ref{prop:w2-tw}, \mca parameterized by~$\tw$ is $\wtwo$-hard~; thus, one may look for a parameter whose combination with~$\tw$ may lead to \mca being $\fpt$. Here, we focus on parameter $\lc = n_G - |\c|$. We know that \mca is $\wone$-hard relatively to $\lc$, but the problem can be solved in $\bigos(2^{\lc})$ when $G$ is an arborescence~\cite{DBLP:conf/tamc/FertinFJ17}. Recall also that \mca is in $\p$ when $\h$ is an arborescence~\cite{DBLP:conf/tamc/FertinFJ17}, and hence when $\tw = 1$. 
In the following, a \emph{fully-colorful subgraph of $G$} is a subgraph of $G$ that contains {\em exactly} one occurrence of each color $c \in \c$.
\begin{lemma}
\label{lem:2-lc-colorful}
Any graph $G$ with $|\c|$ colors has at most $2^{\lc}$ fully-colorful subgraphs.
\end{lemma}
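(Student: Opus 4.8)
The plan is to count fully-colorful subgraphs directly through the color classes of $G$. First I would partition $V(G)$ according to color: for each color $c \in \c$, let $n_c := |\{v \in V : \col(v) = c\}|$ denote the number of vertices of color $c$. Since the color classes partition $V$, we have $\sum_{c \in \c} n_c = n_G$, and consequently $\lc = n_G - |\c| = \sum_{c \in \c}(n_c - 1)$.

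Next I would observe that a fully-colorful subgraph is completely determined by its vertex set, which by definition contains exactly one vertex of each color. Hence the number of fully-colorful subgraphs of $G$ equals the number of ways to pick one representative vertex per color class, namely $\prod_{c \in \c} n_c$.

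The heart of the argument is then the elementary inequality $n_c \le 2^{n_c - 1}$, valid for every positive integer $n_c$; it holds with equality for $n_c \in \{1,2\}$ and follows for larger values by a straightforward induction, since $n+1 \le 2n$ for every $n \ge 1$. Multiplying this inequality over all colors yields $\prod_{c \in \c} n_c \le \prod_{c \in \c} 2^{n_c - 1} = 2^{\sum_{c \in \c}(n_c - 1)} = 2^{\lc}$, which is exactly the claimed bound.

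I expect no serious obstacle here. The only two points that require care are, first, the convention that a fully-colorful subgraph is identified with its vertex set, so that the count is the product $\prod_{c \in \c} n_c$ rather than a larger quantity that would also range over edge choices; and second, the verification of the per-color inequality $n_c \le 2^{n_c - 1}$, which is precisely where the base $2$ of the exponent in $2^{\lc}$ comes from. Both are elementary, so the proof is essentially a one-line computation once the reformulation in terms of color-class sizes is in place.
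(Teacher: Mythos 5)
Your proof is correct and follows essentially the same route as the paper's: both count the fully-colorful subgraphs as $\prod_{c \in \c} n_c$ and bound this by $2^{\lc}$ via the elementary inequality $n_c \le 2^{n_c-1}$. Your version is in fact slightly more careful, since it makes explicit the identification of a fully-colorful subgraph with its vertex set and the identity $\lc = \sum_{c \in \c}(n_c-1)$, both of which the paper leaves implicit.
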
 

\begin{proof}
Let $n_c$ be the number of vertices of color $c \in \c$ and notice that $\prod_{c \in \c}{n_c}$ is the number of fully-colorful subgraphs of $G$. Then, observe that $n_c \le 2^{n_c -1}$ for all $n_c \in \mathbb{N}$, which implies $\prod_{c \in \c}{n_c} \le 2^{ \sum_{c \in \c} n_c -1 }$ and thus $\prod_{c \in \c}{n_c} \le 2^{\lc}$. 
\end{proof}


\begin{theorem}
\label{thm:fpt-lc-twch}
\mca can be solved in $\bigos(2^{\lc}\cdot 4^{\tw})$ time and $\bigos(3^{\tw})$ space.
\end{theorem}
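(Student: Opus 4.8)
The plan is to reduce the problem, via an enumeration whose size is controlled by Lemma~\ref{lem:2-lc-colorful}, to a family of instances each solvable by a dynamic program over a nice tree decomposition of $U(\h)$ (Definition~\ref{def:nice-tree-decomposition}), and to design that program so that it needs only a constant number of states per bag vertex. First I would enumerate all fully-colorful subgraphs of $G$; by Lemma~\ref{lem:2-lc-colorful} there are at most $2^{\lc}$ of them and they can be listed in $\bigos(2^{\lc})$ time by choosing one representative vertex per color. Any optimal colorful arborescence $T$ is contained in at least one such subgraph $G'$: for each color used by $T$ take its unique vertex in $T$, and for each unused color pick an arbitrary representative. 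It therefore suffices to compute, for every fully-colorful subgraph $G'$, a maximum-weight arborescence rooted in $r$ inside $G'$, and return the maximum over all $G'$. Since $G'$ has exactly one vertex $v_c$ of each color $c \in \c$, I would identify $G'$ with a weighted sub-DAG $\h'$ of $\h$ on vertex set $\c$ by relabeling $v_c$ as $c$ and transferring the arc weights. Every arborescence of $G'$ is automatically colorful, so solving \mca on $G'$ amounts to finding a maximum-weight arborescence rooted in $\col(r)$ in $\h'$. Moreover, as $U(\h')\subseteq U(\h)$, any tree decomposition of $U(\h)$ of width $\tw$ is also a tree decomposition of $U(\h')$ of width at most $\tw$, which I would turn into a nice one in polynomial time.

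The structural heart of the argument is that, because $\h'$ is a DAG, the arborescence constraint is \emph{local}. Concretely, if $S\subseteq\c$ with $\col(r)\in S$, and every $c\in S\setminus\{\col(r)\}$ is assigned exactly one in-arc $(p(c),c)$ with $p(c)\in S$, then the selected arcs form a single arborescence rooted in $\col(r)$: following the parent pointers $c, p(c), p(p(c)),\dots$ strictly decreases the topological rank, hence terminates at a vertex of $S$ with no selected in-arc, which can only be $\col(r)$; thus every vertex of $S$ is connected to $\col(r)$ and, each vertex having a unique parent, the result is a tree. This is what lets the dynamic program avoid tracking the usual connectivity partitions of the bags.

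I would then run a bottom-up dynamic program over the nice tree decomposition using three states per bag vertex $c$: \emph{out} ($c\notin S$), \emph{rooted} ($c\in S$ and the in-arc of $c$ has already been selected, with $\col(r)$ permanently \emph{rooted}), and \emph{pending} ($c\in S$ but its in-arc is still to be selected). The table of a node is indexed by the $\le 3^{\tw+1}$ assignments of states to its bag, giving the $\bigos(3^{\tw})$ space bound. The restrictive transitions are: at a \textsc{Forget Node}, a \emph{pending} vertex may not be forgotten (its unique in-arc must share a bag with it, so it could never be selected later), so only \emph{out} and \emph{rooted} vertices are projected out; at an introduce-edge step (from the standard augmentation of the decomposition with edge-introduction nodes) the arc $(c,c')$ may be used as the in-arc of $c'$ only if $c$ is \emph{rooted} or \emph{pending} and $c'$ is \emph{pending}, after which $c'$ becomes \emph{rooted} and $w(c,c')$ is added, the best of this and leaving the states unchanged being kept; and at a \textsc{Join Node} the two children's states are merged vertex by vertex according to the four valid combinations $(\mathit{out},\mathit{out})\!\to\!\mathit{out}$, $(\mathit{rooted},\mathit{pending})\!\to\!\mathit{rooted}$, $(\mathit{pending},\mathit{rooted})\!\to\!\mathit{rooted}$, $(\mathit{pending},\mathit{pending})\!\to\!\mathit{pending}$, the combination $(\mathit{rooted},\mathit{rooted})$ being forbidden as it would assign two in-arcs to the same vertex. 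Each join node is thus processed in $\bigos(4^{\tw})$ time, dominating the per-node cost, and the optimum for $G'$ is read at the root from entries in which $\col(r)$ is present and no vertex is \emph{pending}. Multiplying the $\bigos(4^{\tw})$-time, $\bigos(3^{\tw})$-space program by the $2^{\lc}$ subgraphs yields the claimed $\bigos(2^{\lc}\cdot 4^{\tw})$ time and $\bigos(3^{\tw})$ space.

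I expect the main obstacle to be proving rigorously the local characterization of arborescences and, correspondingly, that the \textsc{Forget} and \textsc{Join} transitions select exactly one in-arc for every non-root vertex of $S$. In particular one must verify that forbidding both the forgetting of \emph{pending} vertices and the $(\mathit{rooted},\mathit{rooted})$ join exactly captures the ``exactly one parent'' requirement, using the tree-decomposition property that the bags containing the endpoints of any fixed arc form a connected subtree.
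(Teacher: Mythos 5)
Your proposal is correct and follows the same high-level strategy as the paper---enumerate the at most $2^{\lc}$ fully-colorful subgraphs via Lemma~\ref{lem:2-lc-colorful}, then run a dynamic program over a nice tree decomposition of $U(\h)$---but your inner dynamic program is designed differently, and in a way that makes the correctness argument more transparent. The paper indexes each bag by a partition $(L_1,L_2,L_3)$ into roots of partial arborescences, internal vertices, and excluded vertices; its $\bigos(4^{\tw})$ factor arises at \textsc{Introduce Nodes} (guessing which subset $S\subseteq L_2$ becomes the set of outneighbors of the introduced vertex), and an auxiliary table $D$ is needed at \textsc{Join Nodes} to cancel the doubly counted forgotten part. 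You instead use per-vertex states (out/pending/rooted) over a decomposition augmented with edge-introduction nodes, so each arc weight is accounted for exactly once by construction and the $\bigos(4^{\tw})$ factor moves to the \textsc{Join Nodes}; this removes the need for the second table. Your explicit structural observation---that in a DAG an arborescence rooted at $\col(r)$ is exactly a vertex set $S\ni\col(r)$ together with one in-arc from $S$ for every other vertex of $S$, proved via topological order---is the same fact the paper uses implicitly to justify that its collections of disjoint arborescences eventually glue into a single arborescence; making it explicit is a genuine gain in rigor. Two small details remain to be added: restrict the enumeration to fully-colorful subgraphs whose representative of color $\col(r)$ is $r$ itself (otherwise an arborescence found in $\h'$ need not correspond to one rooted at $r$ in $G$), and arrange for $\col(r)$ to lie in the root bag (or record its status before it is forgotten) so that the final read-off is well defined. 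Neither affects the claimed $\bigos(2^{\lc}\cdot 4^{\tw})$ time and $\bigos(3^{\tw})$ space.
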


\begin{proof}
In the following, let $\langle \{X_i : i \in I\}, \t \rangle$ be a nice tree decomposition of $U(\h)$. In this proof, we provide a dynamic programming algorithm that makes use of $\langle \{X_i : i \in I\}, \t \rangle$ in order to compute a solution to \mca in any fully-colorful subgraph $G' \subseteq G$, to which we remove all vertices that are not accessible from $r$.
First, observe that $\langle \{X_i : i \in I\}, \t \rangle$ is also a correct nice tree decomposition for the (undirected) color hierarchy graph of any subgraph of $G$. Second, as any colorful graph is equivalent to its color hierarchy graph, notice that $\langle \{X_i : i \in I\}, \t \rangle$ is also a correct nice tree decomposition of any fully-colorful subgraph $G' \in G$. Therefore, we assume wlog that any bag $X_i$ contains vertices of such graph $G'$ instead of colors, and that $X_0 = \{r\}$ is the root of $\langle \{X_i : i \in I\}, \t \rangle$. 

Now, for any $i \in I$ and for any subsets $L_1, L_2, L_3$ that belong to $X_i$ such that $L_1 \oplus L_2 \oplus L_3 = X_i$, let $T_i[L_1, L_2, L_3]$ store the weight of a \emph{partial solution} of \mca in $G'$, which is a collection of $|L_1|$ disjoint arborescences such that : 
\begin{itemize}
 \item each $v \in L_1$ is the root of exactly one such arborescence,
 \item each $v \in L_2$ is contained in exactly one such arborescence,
 \item no vertex~$v \in L_3$ belongs to any of these arborescences,
 \item any vertex $v \in V$ whose color is forgotten below $X_i$ can belong to any such arborescence,
 \item there does not exist another collection of arborescences with a larger sum of weights under the same constraints.
\end{itemize}
Besides, let us define an entry of type $D_i[L_1, L_2, L_3]$ which stores the same partial solution as entry $T_i[L_1, L_2, L_3]$, except for the vertices $v \in V$ whose colors are forgotten below $X_i$ which cannot belong to any arborescence of the partial solution. We now detail how to compute each entry of $T_i[L_1, L_2, L_3]$. We stress that each entry of $D_i[L_1, L_2, L_3]$ is filled exactly as an entry of type $T_i[L_1, L_2, L_3]$, apart from the case of forget nodes which we detail below.\\

\noindent \textit{$\bullet$ If $X_i$ is a leaf node :} $T_i[L_1 , L_2, L_3]$ = 0

\noindent Notice that leaf nodes are base cases of the dynamic programming algorithm as $\langle \{X_i : i \in I\}, \t \rangle$ is a nice tree decomposition. Moreover, recall that leaf nodes have size 1 and thus that the only partial solution for such nodes has a weight of zero. \\

\noindent \textit{$\bullet$ If $X_i$ is an introduce node having a child $X_j$ and if $v^*$ is the introduced vertex :} \\
\[
T_i[L_1, L_2, L_3] = 
\left\{ 
\begin{array}{l l}
  A)~\underset{\forall S \subseteq L_2}{\max} \{ \underset{v \in S}{\sum}w(v^*, v) + T_j[L_1 \cup S \setminus \{v^*\}, L_2 \setminus S, L_3]\} \\
  \phantom{C)~T_j[L_1, L_2, L_3 \setminus \{v^*\}]\} \qquad \qquad \qquad \qquad} \text{if $v^* \in L_1$} \\
  B)~\underset{\forall u \in (L_1 \cup L_2)}{\max} \{ w(u, v^*) + \\
  \quad \quad \underset{\forall S \subseteq (L_2 \setminus \{u\})}{\max} \{ \underset{v \in S}{\sum}w(v^*, v) + T_j[L_1 \cup S \setminus \{v^*\}, L_2 \setminus S, L_3]\} \} \\ 
  \phantom{C)~T_j[L_1, L_2, L_3 \setminus \{v^*\}]\} \qquad \qquad \qquad \qquad} \text{if $v^* \in L_2$} \\
  C)~T_j[L_1, L_2, L_3 \setminus \{v^*\}]\} \qquad \qquad \qquad \qquad \text{if $v^* \in L_3$} 
\end{array} 
\right.
\]

\noindent where we set $w(u,v) = - \infty$ when there is no arc from $u$ to $v$ in $G'$. There are three cases: $v^*$ is the root of an arborescence in a partial solution (case $A)$), an internal vertex of such a solution (case $B)$) or ~$v^*$ does not belong to such a solution (case $C)$). In case $A)$, $S$~corresponds to the set of outneighbors of~$v^*$ in the partial solution, thus the vertices of~$S$ do not have any other inneighbor in the partial solution. Therefore, in the corresponding entry $T_j$, the vertices of~$S$ are roots. Now, notice that $B)$ is very similar to $A)$. In addition to a given set $S$ of outneigbors, $v^*$ being in $L_2$ implies that $v^*$ has an inneighbor~$u \in (L_1 \cup L_2)$ in the partial solution. Since the inneighbor~$u$ cannot be an outneighbor at the same time, $u$~is not contained in~$S$. Exhaustively trying all possibilities for both~$S$ and~$u$ ensures the correctness of the solution. Finally, by definition of $L_3$, observe that $v^*$ does not belong to the partial solution of $T_i[L_1, L_2, L_3]$ if $v^* \in L_3$. \\

\noindent \textit{$\bullet$ If $X_i$ is a forget node having a child $X_j$ and if $v^*$ is the forgotten vertex :}
$$T_i[L_1, L_2, L_3] = \max \{ T_j[L_1 , L_2 \cup \{v^*\}, L_3], T_j[L_1 , L_2, L_3 \cup \{v^*\}] \}$$ 

\noindent Informally, the above formula determines whether the collection of arborescences that is stored in $T_i[L_1,L_2,L_3]$ had a higher weight with or without $v^*$ as an internal vertex. Observe that we do not consider the case where $v^*$ is the root of an arborescence as such an arborescence could not be connected to the rest of the partial solution via an introduced vertex afterwards. Besides, notice that $D_i[L_1, L_2, L_3] = D_j[L_1 , L_2, L_3 \cup \{v^*\}]$ as the partial solution in $D_i[L_1, L_2, L_3]$ does not contain any forgotten vertex by definition. \\

\noindent \textit{$\bullet$ If $X_i$ is a join node having two children $X_j$ and $X_k$ :}
$$T_i[L_1, L_2, L_3] = T_j[L_1 , L_2, L_3]~+~T_k[L_1 , L_2, L_3]-D_i[L_1 , L_2, L_3]$$

\noindent Informally, the partial solution in $T_i[L_1 , L_2, L_3]$ can contain both the forgotten vertices of the partial solution in $T_j[L_1 , L_2, L_3]$ and those of the partial solution in $T_k[L_1 , L_2, L_3]$. Recall that the partial solution in $D_i[L_1 , L_2, L_3]$ does not contain any forgotten vertices and therefore that any arc of the partial solution in $T_i[L_1 , L_2, L_3]$ is only counted once.

\medskip

\noindent We fill the tables from the leaves to the root for all $i \in I$ until $T_0$ and any entry of type $T_i[L_1 , L_2, L_3]$ is directly computed after the entry of type $D_i[L_1 , L_2, L_3]$. If $T' = (V_{T'}, A_{T'})$ is a solution of \mca in a fully-colorful subgraph $G' \subseteq G$, then $w(T') = T_0[\{r\}, \emptyset, \emptyset]$. Thus, for each fully-colorful subgraph we can compute the solution by filling the tables~$T$ and~$D$. The table has~$3^{\tw}$ entries which implies the upper bound on the space consumption. The most expensive recurrences in terms of running time are the one of cases~A) and~B) for introduce nodes~$X_i$ where we consider altogether~$\bigo(4^{\tw})$ cases: each term corresponds to a partition of~$X_i$ into four sets~$L_1$,~$L_2\setminus S$,~$L_2\cap S$, and~$L_3$. Finally, the solution of \mca in $G$ is also the solution of at least one fully-colorful subgraph $G' \subseteq G$. Therefore, computing the solution of \mca for any such subgraph $G'$ ensures the correctness of the algorithm and hence, by Lemma~\ref{lem:2-lc-colorful}, adding a factor $\bigo(2^{\lc})$ to the complexity of the above algorithm proves our theorem. 
\end{proof}

\newpage


\bibliography{references}

\begin{thebibliography}{10}

\bibitem{512234419950701}
Noga Alon, Raphael Yuster, and Uri Zwick.
\newblock Color-coding.
\newblock {\em J. {ACM}}, 42(4):844--856, 1995.

\bibitem{BBFKN11}
Nadja Betzler, Ren{\'{e}} van Bevern, Michael~R. Fellows, Christian
  Komusiewicz, and Rolf Niedermeier.
\newblock Parameterized algorithmics for finding connected motifs in biological
  networks.
\newblock {\em {IEEE/ACM} Trans. Comput. Biology Bioinform.}, 8(5):1296--1308,
  2011.

\bibitem{BKK16}
Andreas Bj{\"{o}}rklund, Petteri Kaski, and Lukasz Kowalik.
\newblock Constrained multilinear detection and generalized graph motifs.
\newblock {\em Algorithmica}, 74(2):947--967, 2016.

\bibitem{DBLP:conf/eccb/BockerR08}
Sebastian B{\"{o}}cker and Florian Rasche.
\newblock Towards \emph{de novo} identification of metabolites by analyzing
  tandem mass spectra.
\newblock In {\em 7th European Conference on Computational Biology (ECCB'08)},
  volume 24(16), pages i49--i55. Bioinformatics, 2008.

\bibitem{DBLP:journals/jcss/BodlaenderDFH09}
Hans~L. Bodlaender, Rodney~G. Downey, Michael~R. Fellows, and Danny Hermelin.
\newblock On problems without polynomial kernels.
\newblock {\em J. Comput. Syst. Sci.}, 75(8):423--434, 2009.

\bibitem{DBLP:journals/siamdm/BodlaenderJK14}
Hans~L. Bodlaender, Bart M.~P. Jansen, and Stefan Kratsch.
\newblock Kernelization lower bounds by cross-composition.
\newblock {\em {SIAM} J. Discrete Math.}, 28(1):277--305, 2014.

\bibitem{DBLP:journals/tcs/BodlaenderTY11}
Hans~L. Bodlaender, St{\'{e}}phan Thomass{\'{e}}, and Anders Yeo.
\newblock Kernel bounds for disjoint cycles and disjoint paths.
\newblock {\em Theor. Comput. Sci.}, 412(35):4570--4578, 2011.

\bibitem{1093576520090101}
Sharon Bruckner, Falk H{\"{u}}ffner, Richard~M. Karp, Ron Shamir, and Roded
  Sharan.
\newblock Topology-free querying of protein interaction networks.
\newblock {\em J. Comput. Biol.}, 17(3):237--252, 2010.

\bibitem{DBLP:books/sp/CyganFKLMPPS15}
Marek Cygan, Fedor~V. Fomin, Lukasz Kowalik, Daniel Lokshtanov, D{\'{a}}niel
  Marx, Marcin Pilipczuk, Michal Pilipczuk, and Saket Saurabh.
\newblock {\em Parameterized Algorithms}.
\newblock Springer, 2015.

\bibitem{DBLP:journals/dam/CyganPPW12}
Marek Cygan, Marcin Pilipczuk, Michal Pilipczuk, and Jakub~Onufry Wojtaszczyk.
\newblock Kernelization hardness of connectivity problems in
  \emph{d}-degenerate graphs.
\newblock {\em Discr. Appl. Math.}, 160(15):2131--2141, 2012.

\bibitem{DBLP:journals/talg/DomLS14}
Michael Dom, Daniel Lokshtanov, and Saket Saurabh.
\newblock Kernelization lower bounds through colors and {ID}s.
\newblock {\em {ACM} Trans. Algorithms}, 11(2):13:1--13:20, 2014.

\bibitem{DBLP:series/txcs/DowneyF13}
Rodney~G. Downey and Michael~R. Fellows.
\newblock {\em Fundamentals of Parameterized Complexity}.
\newblock Texts in Computer Science. Springer, 2013.

\bibitem{BoeckerArxiv18}
Kai Dührkop, Marie~Anne Lataretu, W.~Timothy~J. White, and Sebastian Böcker.
\newblock Heuristic algorithms for the maximum colorful subtree problem.
\newblock {\em arXiv}, 2018.

\bibitem{fellowsupper}
Michael~R. Fellows, Guillaume Fertin, Danny Hermelin, and St{\'{e}}phane
  Vialette.
\newblock Upper and lower bounds for finding connected motifs in vertex-colored
  graphs.
\newblock {\em J. Comput. Syst. Sci.}, 77(4):799--811, 2011.

\bibitem{DBLP:conf/tamc/FertinFJ17}
Guillaume Fertin, Julien Fradin, and G{\'{e}}raldine Jean.
\newblock Algorithmic aspects of the maximum colorful arborescence problem.
\newblock In T.~V. Gopal, Gerhard J{\"{a}}ger, and Silvia Steila, editors, {\em
  Theory and Applications of Models of Computation - 14th Annual Conference,
  {TAMC} 2017, Proceedings}, volume 10185 of {\em Lecture Notes in Computer
  Science}, pages 216--230, 2017.

\bibitem{DBLP:conf/cpm/FertinK16}
Guillaume Fertin and Christian Komusiewicz.
\newblock Graph motif problems parameterized by dual.
\newblock In {\em 27th Annual Symposium on Combinatorial Pattern Matching,
  {CPM} 2016}, volume~54 of {\em LIPIcs}, pages 7:1--7:12. Schloss Dagstuhl -
  Leibniz-Zentrum fuer Informatik, 2016.

\bibitem{GHN04}
Jiong Guo, Falk H{\"{u}}ffner, and Rolf Niedermeier.
\newblock A structural view on parameterizing problems: Distance from
  triviality.
\newblock In {\em Parameterized and Exact Computation, First International
  Workshop, {IWPEC} 2004, Bergen, Norway, September 14-17, 2004, Proceedings},
  volume 3162 of {\em Lecture Notes in Computer Science}, pages 162--173.
  Springer, 2004.

\bibitem{DBLP:conf/coco/Karp72}
Richard~M. Karp.
\newblock Reducibility among combinatorial problems.
\newblock In Raymond~E. Miller and James~W. Thatcher, editors, {\em Proceedings
  of a Symposium on the Complexity of Computer Computations}, The {IBM}
  Research Symposia Series, pages 85--103. Plenum Press, New York, 1972.

\bibitem{lacroix}
Vincent Lacroix, Cristina~G. Fernandes, and Marie{-}France Sagot.
\newblock Motif search in graphs: Application to metabolic networks.
\newblock {\em {IEEE/ACM} Trans. Comput. Biology Bioinform.}, 3(4):360--368,
  2006.

\bibitem{9410444820120101}
Imran Rauf, Florian Rasche, Francois Nicolas, and Sebastian B{\"{o}}cker.
\newblock Finding maximum colorful subtrees in practice.
\newblock {\em J. Comput. Biol.}, 20(4):311--321, 2013.

\end{thebibliography}

\newpage
\section*{Appendix}

\appendix
\label{sec:appendix}

\section{Proofs from Section~\ref{sec:in2}}

\renewcommand{\thetheorem}{\ref{prop:no-poly-kernel-lc-in2}}
\begin{proposition}
\mca does not admit any polynomial kernel relatively to $\nhs + \lc$, unless $\np \subseteq \cnpp$.
\end{proposition}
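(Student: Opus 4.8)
The plan is to establish the result through a \emph{polynomial parameter transformation} (PPT)~\cite{DBLP:journals/tcs/BodlaenderTY11}: recall that if $\tilde{P}$ is $\np$-hard, $\tilde{Q}$ is in $\np$, $P$ admits no polynomial kernel, and there is a polynomial-time reduction mapping any instance of $P$ to an equivalent instance of $Q$ whose parameter is bounded by a polynomial in the parameter of $P$, then $Q$ admits no polynomial kernel unless $\np \subseteq \cnpp$. As the source problem $P$ I would use \sc parameterized by the size $|\mathcal{U}|$ of the universe, which is known not to admit a polynomial kernel unless $\np \subseteq \cnpp$.

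First I would build, from a \sc instance $(\mathcal{U}, \mathcal{F}, k)$ with $|\mathcal{U}| = q$ and $|\mathcal{F}| = p$, essentially the same three-level DAG $G$ as in the proof of Proposition~\ref{prop:w2-tw}: a root $r$, a level-2 vertex $v_i$ per set $S_i$, a level-3 vertex $z_j$ per element $u_j$, an arc $(r, v_i)$ of weight $-1$ for each $i \in [p]$, and an arc $(v_i, z_j)$ of weight $p$ whenever $u_j \in S_i$. The crucial modification is that I now assign a \emph{distinct} color to \emph{every} vertex of $G$; consequently $G$ is colorful, $\h(G)$ is isomorphic to $G$ and hence a DAG, so $(G, \c, \col, w, r)$ is a valid \mca instance. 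The target threshold would be $pq - k$, and the claim is that $(\mathcal{U}, \mathcal{F}, k)$ has a set cover of size at most $k$ if and only if $G$ has a colorful arborescence of weight at least $pq - k$.

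The equivalence would be argued exactly as in Proposition~\ref{prop:w2-tw}. Since all colors are distinct, colorfulness is automatic, so a colorful arborescence rooted in $r$ is simply a maximum-weight out-tree; the weights force any solution of weight at least $pq - k$ (with $k \le p$, which may be assumed w.l.o.g.) to contain every level-3 vertex, as omitting even one caps the weight at $pq - p - 1 < pq - k$. The level-2 vertices used to reach the retained level-3 vertices then form a set cover, each costing $-1$, so a solution of weight at least $pq - k$ corresponds precisely to a cover of size at most $k$, and conversely. The point worth stressing is that, although $\h(G)$ is colorful, the instance stays $\np$-hard because an arborescence rooted in $r$ must connect each chosen $z_j$ to $r$ through a chosen $v_i$; this connectivity requirement is exactly what prevents the problem from degenerating into a polynomial-time maximum-weight branching computation.

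It then remains to bound the parameter. Here $\lc = n_G - |\c| = 0$, since $G$ is colorful, while the only vertices of indegree at least two in $\h(G) \cong G$ are the level-3 vertices $z_j$ whose element lies in at least two sets; hence $\nhs \le q = |\mathcal{U}|$ and $\nhs + \lc \le |\mathcal{U}|$, which is polynomial in the source parameter. Combined with the $\np$-hardness of \sc and the membership of \mca in $\np$, this yields a valid PPT and proves the proposition. The main obstacle to anticipate — and the reason Proposition~\ref{prop:no-poly-kernel-c} does not already imply this statement — is that the hard family must have \emph{small $\lc$} (here $\lc = 0$) rather than merely small $\nhs$: the or-cross composition of Proposition~\ref{prop:no-poly-kernel-c} creates many duplicate-colored vertices and hence a large $\lc$, so the real work is to force pairwise-distinct colors while keeping the instance $\np$-hard, which is precisely what the PPT from \sc achieves.
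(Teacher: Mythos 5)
Your proposal is correct and follows essentially the same route as the paper: a polynomial parameter transformation from \sc parameterized by $|\mathcal{U}|$, using the same three-level DAG with all-distinct colors so that $\lc = 0$ and $\nhs \leq |\mathcal{U}|$, together with the $\np$-hardness of \sc and membership of the decision version of \mca in $\np$. Your closing remark about why the or-cross composition of Proposition~\ref{prop:no-poly-kernel-c} does not suffice (it blows up $\lc$) is also an accurate reading of the situation.
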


We first present the formal definition of a polynomial parameter transformation and then use that technique to prove the proposition.

\renewcommand{\thetheorem}{2.8}
\begin{definition} (\cite{DBLP:journals/tcs/BodlaenderTY11,DBLP:journals/talg/DomLS14,DBLP:journals/dam/CyganPPW12})
\label{def:poly-param-transfo}
Let $P$ and $Q$ be two parameterized problems. We say that $P$ is polynomial parameter reducible to $Q$ if there exists a polynomial-time computable function $f : \Sigma^* \times \mathbb{N} \to \Sigma^* \times \mathbb{N}$ and a polynomial $p$, such that for all $(x,k) \in \Sigma^* \times \mathbb{N}$ the following holds: $(x,k) \in P$ iff $(x',k')=f(x,k) \in Q$, and $k' \leq p(k)$.
The function $f$ is a called a \emph{polynomial parameter transformation}.
\end{definition}

\begin{proof}
We reduce from \sc, which is defined as follows.

\begin{center}
\fbox{\begin{minipage}{0.95\textwidth}
\textbf{\sc}\\
\textbf{Input:} A universe $\mathcal{U} = \{u_1, u_2,\ldots, u_q\}$, a family $\mathcal{F} = \{S_1, S_2,\ldots, S_p\}$ of subsets of $\mathcal{U}$, an integer $k$.\\
\textbf{Output:} A $k$-sized subfamily $\s \subseteq \mathcal{F}$ of sets whose union is $\mathcal{U}$. 
\end{minipage}}
\end{center}

The reduction is as follows: for any instance of \sc, we create a three-levels DAG $G = (V = V_1 \cup V_2 \cup V_3, A)$ with $V_1 = \{r\}$, $V_2 = \{ v_i : i \in [p] \}$ and $V_3 = \{ z_j : j \in [q] \}$. We call $V_2$ the second level of~$G$ and~$V_3$ the third level of~$G$. Informally, we associate one vertex at the second level to each set of $\mathcal{F}$ and one vertex at the third level to each element of $\mathcal{U}$. There is an arc of weight $-1$ from $r$ to each vertex at level 2 and an arc of weight $p$ from $v_i$ to $z_j$, for all $i \in [p]$ and $j \in [q]$ such that the element $u_j$ is contained in the set $S_i$. Now, our coloring function $\col$ is as follows: give a unique color to each vertex of $G$. Notice that $\h$ is also a three-levels DAG with resp. $\col(V_1)$, $\col(V_2)$ and $\col(V_3)$ at the first, second and third levels.
Therefore, the above construction is a correct instance of \mca.
We now prove that there exists a $k$-sized subfamily $\s \subseteq \mathcal{F}$ of sets whose union is $\mathcal{U}$ if and only if there exists a colorful arborescence $T$ in $G$ of weight $w(T) = pq-k$. 

$(\Rightarrow)$ Suppose there exists a $k$-sized subfamily $\s \subseteq \mathcal{F}$ of sets whose union is $\mathcal{U}$ and let $\texttt{True} = \{i \in [p] : S_i \in \s \}$. Then, we set $V_T = \{r\} \cup \{v_i : i \in \texttt{True} \} \cup \{ z_j : j \in [q] \}$. Necessarily, $G[V_T]$ is connected: first, $r$ is connected to every level-2 vertex~; second, a vertex $z_j$ corresponds to an element $u_j$ which is contained in some set $S_i \in \s$. Now, let $T$ be a spanning arborescence of $G[V_T]$. Clearly, $T$ is colorful and of weight $pq-k$. 

($\Leftarrow$) Suppose there exists a colorful arborescence $T=(V_T, A_T)$ in $G$ of weight $w(T) = pq-k$. Notice that any arborescence $T'$ in $G$ which contains $r$ and at least one vertex from $V_3$ must contain at least one vertex of from $V_2$ in order to be connected. Therefore, if such an arborescence $T'$ does not contain one vertex of type $z_j$, then $w(T') < pq-p-1$ and $w(T') < w(T)$. Hence, if $w(T) = pq-k$ then $T$ contains each vertex of the third level and $T$ contains exactly $k$ vertices at the second level. Now, let $\s = \{ S_i : i \in [p]~s.t.~v_i \in V_T\}$ and notice that $\s$ is a $k$-sized subfamily of $\mathcal{F}$ whose union is $\mathcal{U}$ as all vertices of the third level belong to $T$. Our reduction is thus correct.

Now, recall that $\h$ is a three-levels DAG with resp. $\col(V_1)$, $\col(V_2)$ and $\col(V_3)$ at the first, second and third levels. By construction of $G$, if there exists $c \in V(\h)$ such that $d^-(c) \geq 1$, then $c \in col(V_3)$. Moreover, recall that $|col(V_3)| = |\u|$ and observe that $\lc = 0$ as $G$ is colorful. Therefore, $\nhs + \lc \leq |\u|$ and we provided a correct polynomial parameter transformation from \sc parameterized by~$|\u|$ to \mca parameterized by~$\nhs + \lc$. Now, recall that \sc is unlikely to admit a polynomial kernel for~$|\u|$~\cite{DBLP:journals/talg/DomLS14} and that \sc is $\np$-hard~\cite{DBLP:conf/coco/Karp72}. Moreover, the decision version of \mca, which asks for a solution of weight at least $k$, clearly belongs to $\np$. As a consequence, \mca does not admit any polynomial kernel for~$\nhs + \lc$ unless~$\np \subseteq \cnpp$. 
\end{proof}

\paragraph*{Proof of the Polyonomial Kernel for $\nhs + \ell$}
Recall that $X$ is the set of \textit{difficult colors} which are the colors of indegree at least 2 in $\h$, and hence $|X| = \nhs$. 
\renewcommand{\thetheorem}{\ref{thm:kernel-l-in2}}
\begin{theorem}
\mca admits a problem kernel with $\bigo(\nhs\cdot \ell^2)$ vertices.
\end{theorem}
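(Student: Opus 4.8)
The plan is to design a small set of polynomial-time reduction rules whose exhaustive application leaves an instance with $\bigo(\nhs\cdot\ell^2)$ vertices, and to obtain the bound by separately controlling two quantities: (i)~the number of \emph{relevant} colors, which I will bound by $\bigo(\nhs\cdot\ell)$, and (ii)~the number of vertices kept per color, which I will bound by $\bigo(\ell)$; their product yields the claimed size. Throughout I exploit that, by Lemma~\ref{lem:in2-disjoint-colors} together with the fact that a colorful arborescence is equivalent to its color hierarchy graph, any optimal solution $T^\star$ projects onto a sub-arborescence of $\h$ rooted at $\col(r)$, and that since $T^\star$ omits only $\ell$ vertices and is colorful, at most $\ell$ colors are left unused (precisely $\ell-\lc$). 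This ``budget'' of unused colors is the quantity I will charge against.

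First I would apply the obvious sanitizing rules: (R1)~delete every vertex not reachable from $r$ in $G$; (R2)~delete every arc, and every maximal pendant substructure, whose inclusion can never increase the weight of a colorful arborescence (for instance a sink reachable only through non-positive arcs), which is safe as it changes no optimum. To bound the vertices per color I would then use a \emph{domination} rule: for a fixed color $c$, call a vertex $v$ dominated if there is another vertex $v'$ of color $c$ whose sets of admissible parents and children (described in terms of the colors $v'$ may be attached to) contain those of $v$ with at-least-as-good incident weights; a dominated vertex is never needed and may be deleted. After exhaustion, the surviving vertices of a color $c$ carry pairwise incomparable color-profiles, and I plan to argue that two such profiles can differ only on a color that an optimal solution is forced to exclude, charging each surviving vertex of $c$ beyond a constant to one of the $\bigo(\ell)$ unused colors, which gives $\bigo(\ell)$ vertices per relevant color.

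The heart of the argument is bounding the number of relevant colors. After (R1)--(R2) I would add (R3), which contracts every maximal path $c_1\to\cdots\to c_m$ of \emph{degree-$2$} colors (indegree and outdegree $1$ in $\h$) into a single weighted super-arc, the point being that along such a path every included vertex is forced once its endpoints are fixed, so the accumulated forced weight can be recorded on the retained super-arc. This eliminates all colors that are neither the root, nor difficult, nor branch points (outdegree $\geq 2$), nor leaves of $T^\star$, so it suffices to bound the number of branch and leaf colors of $T^\star$; since $T^\star$ is an arborescence, its branch colors are fewer than its leaves, and the whole task reduces to bounding the leaves. Because outdegree in $\h$ is \emph{not} controlled by $\nhs$ (Corollary~\ref{cor:msc-implies-out2}), branching cannot be charged to difficult colors directly; instead I plan a charging argument pinning each leaf of $T^\star$ either to a genuine sink reached at a branch point, or to a color that $T^\star$ is forced to leave unused, the latter being possible only where two branches compete for the same color, i.e.\ at a color of indegree $\geq 2$ -- a difficult color. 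Distributing the $\bigo(\ell)$ unused colors over the $\nhs$ difficult colors then yields $\bigo(\nhs\cdot\ell)$ leaves, hence $\bigo(\nhs\cdot\ell)$ relevant colors overall.

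Combining the two bounds gives $\bigo(\nhs\cdot\ell^2)$ vertices, and the remaining steps are routine: check that each rule preserves the optimum (with the forced weights of contracted chains absorbed into the retained super-arcs), runs in polynomial time, and leaves $\nhs$ and $\ell$ non-increasing, so that the output is a genuine problem kernel. I expect the main obstacle to be precisely the leaf-counting step: making rigorous the claim that, in a fully reduced instance, every leaf of an optimal solution is attributable to a difficult color together with an unused color, and ensuring that the contraction of forced chains neither violates the colorfulness constraint nor creates new difficult colors. This is exactly where the DAG structure of $\h$ and the exclusion budget $\ell$ interact, and where the argument must be carried out with care.
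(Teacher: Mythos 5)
There is a genuine gap, and it sits exactly where you predicted: the leaf-counting step. Your rules cannot bound the number of leaf colors of $\h$ by any function of $\nhs$ and $\ell$. Concretely, let $\h$ be a star whose center is $\col(r)$ and whose $m$ out-neighbours are pairwise distinct colors of indegree one, each realized by a single vertex attached to $r$ by a positive-weight arc. Then $\nhs=0$ and $\ell=0$, yet none of your rules fires: these leaves are profitable, so your R2 (which only deletes structure that ``can never increase the weight'') keeps them; they have outdegree $0$ in $\h$, so the degree-two contraction R3 does not apply; and your charging argument cannot attribute them to a difficult or an unused color because there are none --- they all fall into your first, uncounted category of ``genuine sinks reached at a branch point'', and nothing bounds how many such sinks a single branch point may have. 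The missing idea is the paper's first reduction rule: whenever a color $c$ is \emph{autonomous} (the colors reachable from $c$ induce an arborescence in $\h$ with no incoming arcs from outside), one solves \mca restricted to $G^+(v)$ for each $v\in V_c$ in polynomial time (using that \mca is in $\p$ when the color hierarchy graph is an arborescence), absorbs the optimal weight $w(T_v)$ into the arcs entering $v$, and deletes everything strictly below $V_c$. This is the only mechanism by which \emph{profitable} pendant arborescence-shaped parts of $\h$ disappear, and it is what lets the paper argue that every surviving leaf of $\h$ feeds into a difficult color, so that the leaf count can be charged to $\nhs$ times an indegree bound.

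A second, lesser problem is the domination rule: two vertices of the same color can have incomparable profiles purely because of weights (one has the heavier arc towards color $c_1$, the other towards $c_2$) with no color excluded by the optimum, so the surviving antichain need not be chargeable to the $\ell-\lc$ unused colors. Fortunately this step is unnecessary: a colorful solution omitting only $\ell$ vertices forces $|V_c|\le\ell+1$ for every color, and more simply $n_G=|\c|+\lc\le|\c|+\ell$, so bounding $|\c|$ suffices. The paper's accounting accordingly differs from yours: a third rule caps the number of unique-color inneighbours of each vertex at $\ell+1$, which bounds the indegree of every color in $\h$ by $(\ell+1)^2+\ell$; the forest induced on non-difficult colors then has $O(\nhs\cdot\ell^2)$ leaves, whence $|\c|=O(\nhs\cdot\ell^2)$ and $|V(G)|\le|\c|+\ell$. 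Your target of $O(\nhs\cdot\ell)$ colors times $O(\ell)$ vertices per color would actually give a kernel with $O(\nhs\cdot\ell+\ell)$ vertices once combined with the trivial bound above --- strictly stronger than the theorem --- which is a further sign that the color bound is not attainable with the rules you propose.
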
 
To show this result we provide two data reduction rules. To formulate the rules, we introduce some notation first.  

We say that a vertex $v \in V(G)$ (resp. $c \in V(\h)$) is \emph{reachable} from another vertex $v' \in V(G)$ (resp. $c' \in V(\h)$) if there exists a path from $v'$ to $v$ in $G$ (resp. from $c'$ to $c$ in $\h$). For any vertex $v \in V(G)$, we define $G^+(v)$ as the induced subgraph of the set of vertices that are reachable from $v$ in $G$ (including $v$). Similarly, for any color $c \in V(\h)$, we define $\h^+(c)$ as the induced subgraph of the set of vertices that are reachable from $c$ in $\h$ (including $c$). We call such a color $c$  \emph{autonomous} if 
$(i)$ $\h^+(c)$ is an arborescence and $(ii)$ there does not exist an arc from a color $c_1 \notin \h^+(c)$ to a color $c_2 \in \h^+(c)$ in $\h$. For a vertex $v$, let $T_v$ denote the the maximum colorful arborescence which is rooted at $v$ in $G$. Finally, for a color $c \in \c$, we let $V_c := \{ v \in V : \col(v) = c\}$ denote the set of vertices with color~$c$.

\begin{rrule}\label{rule:autonomous-color}
  If an instance $(G, \c, col, w, r)$ of \mca contains an autonomous color~$c$ such that $\h^+(c)$ contains at least two vertices, then do the following.
  \begin{itemize}
  \item For each vertex~$v\in V_c$, compute the value~$w(T_v)$ of $T_v$, and add~$w(T_v)$ to the weight of each incoming arc of $v$.
  \item Remove from~$G$ all vertices that are reachable from a vertex in~$V_c$, except the vertices of~$V_c$.
  \end{itemize}
\end{rrule}

\renewcommand{\thetheorem}{2.9}
\begin{lemma}
Reduction Rule~\ref{rule:autonomous-color} is correct and can be performed exhaustively in polynomial time.
\label{lem:delete-poly-parts}
\end{lemma}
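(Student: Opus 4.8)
The plan is to prove correctness by first establishing a structural decomposition of every colorful arborescence with respect to the autonomous color~$c$, and then arguing that the weight transformation carried out by the rule preserves the optimal weight. The key structural fact I would prove is the following: in any colorful arborescence $T$ rooted in $r$, the vertices of $T$ whose colors lie in $\h^+(c)$ form a single subtree $T'_v$ of $T$, rooted in the vertex $v\in V_c$ that $T$ contains (unique because $T$ is colorful) -- provided $T$ contains a vertex of $V_c$ at all. To see this, note that every arc $(x,y)$ of $G$ projects to an arc $(\col(x),\col(y))$ of $\h$; hence, by autonomy condition~(ii), whenever $y$ has a color in $\h^+(c)\setminus\{c\}$, its parent $x$ in $T$ must also have a color in $\h^+(c)$. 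Tracing parents upward, every vertex of $T$ with color in $\h^+(c)\setminus\{c\}$ is a descendant of a vertex of color $c$, and colorfulness forces there to be at most one such vertex~$v$. In particular, if $T$ contains no vertex of $V_c$, then $T$ uses no color of $\h^+(c)\setminus\{c\}$ whatsoever.

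Given this decomposition, I would show that the optimal weight is the same in $G$ and in the reduced instance $G_{\mathrm{red}}$. For the forward direction, take an optimal $T$; if it contains some $v\in V_c$, replace the subtree $T'_v$ by $T_v$. Since every arborescence rooted in a color-$c$ vertex can only use colors reachable from $c$, i.e.\ colors of $\h^+(c)$, the tree $T_v$ is exactly the maximum colorful arborescence over the colors of $\h^+(c)$, so $w(T'_v)\le w(T_v)$; and by the structural claim the colors used below $v$ in $T_v$ are disjoint from those used by the rest of $T$, so the swap keeps $T$ colorful and cannot decrease its weight -- hence at optimality $w(T'_v)=w(T_v)$. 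Mapping $T$ into $G_{\mathrm{red}}$ (keep the incoming arc of~$v$, which now carries the bonus $w(T_v)$, and delete $T'_v$) preserves the weight; if $T$ contains no vertex of $V_c$, it survives unchanged, and since deleting vertices and reweighting arcs keeps $\h$ a DAG, $G_{\mathrm{red}}$ is a valid instance. The backward direction is symmetric: an optimal arborescence of $G_{\mathrm{red}}$ using an incoming arc of some $v\in V_c$ is expanded by re-attaching $T_v$ below~$v$. The crucial point is that this creates no color conflict, which follows because autonomy~(ii) forces every $r$-reachable vertex whose color lies in $\h^+(c)\setminus\{c\}$ to be reachable from $V_c$ and hence to have been deleted. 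I expect this no-conflict argument, together with the boundary case of a color-$c$ vertex that happens to be the root $r$ (which has no incoming arc to absorb its bonus and must be accounted separately, e.g.\ in a global additive constant), to be the main obstacle.

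For the running time, I would argue that each application of the rule is polynomial and that the rule applies only polynomially often. Computing $w(T_v)$ for each $v\in V_c$ amounts to solving \mca on $G^+(v)$; since the color hierarchy of $G^+(v)$ is a subgraph of the arborescence $\h^+(c)$ (condition~(i)) and therefore is itself acyclic with all indegrees at most one, this is an instance whose color hierarchy is an arborescence and can be solved in polynomial time~\cite{DBLP:conf/tamc/FertinFJ17}. Deleting the reachable vertices and updating the incoming arc weights is clearly polynomial. Finally, each application strictly shrinks $G$: because $\h^+(c)$ has at least two vertices, it contains a color $c'$ with an arc $(c,c')$ in $\h$, which lifts to an arc of $G$ out of some $u\in V_c$, so at least the head of that arc is reachable from $V_c$ and hence removed. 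Thus the rule can be applied at most $|V(G)|$ times, yielding an overall polynomial bound.
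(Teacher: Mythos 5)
Your proof is correct and follows essentially the same route as the paper's: it exploits that $\h^+(c)$ is an arborescence to compute each $T_v$ in polynomial time, observes that the portion of any colorful arborescence whose colors lie in $\h^+(c)$ is a subtree hanging below the (unique) vertex of color $c$ it contains, and exchanges that subtree for $T_v$, whose weight has been absorbed into the incoming arc of $v$. You are in fact somewhat more explicit than the paper --- spelling out the structural decomposition, the optimality swap $w(T'_v)=w(T_v)$, the backward direction's no-conflict argument, the degenerate case $\col(r)=c$, and the bound on the number of rule applications --- but the underlying argument is the same.
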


\begin{proof} 
Consider a vertex $v \in V_c$. Since $c$ is autonomous, $\h^+(c)$ is an arborescence and thus we may compute $T_v$ which contains only colors from $\h^+(c)$ in polynomial time~\cite{DBLP:conf/tamc/FertinFJ17}. 

Now, we prove the correctness of the rule, that is, the original instance has a colorful arborescence $T = (V_T, A_T)$ of weight $W$  if and only if the new instance has a colorful arborescence  $T' = (V_{T'}, A_{T'})$ of weight $W$. We show the forward direction of the equivalence; the converse can be seen by symmetric arguments. First, recall that $c$ is an autonomous color. Therefore, if $T$ does not contain any vertex of color $c$, then $T$ does not contain any vertex whose color belongs to $V(\h^+(c))$ and we can trivially set $T' = T$. Second, if $T$ contains a vertex $v$ of color $c$ and whose inneighbor is called $v^-$ in $T$, then we define a subset of vertices $S_c \subseteq V_T$ such that any vertex $v^+\in V_T$ belongs to $S_c$ if $v^+$ is reachable from $v$. We thus state that $V_{T'} = (V_T  \setminus S_c) \cup \{v\} $ and that $A_{T'}$ contains all the arcs from $A_T$ that are not in $\h^+(c)$. Now, recall that we computed the weight $w(T_{v})$ of the maximum colorful arborescence that was rooted in $v$ in $G$ and that $w'(v^-, v) = w(v^-, v) + w(T_{v})$, which ensures that $w(T) = w(T')$.  
\end{proof}

\renewcommand{\thetheorem}{2.10}
In the following, for any vertices $v, v' \in V(G)$ such that $v'$ is reachable from $v$ in $G$, we denote $\pi(v, v')$ as the length of the maximum weighted path from $v$ to $v'$ in $G$.
\begin{rrule}\label{rule:unique-inneighbor}
  If an instance $(G, \c, col, w, r)$ of \mca contains  a triple $\{c_1, c_2, c_3\} \subseteq \c$ such that $(i)~c_1$ is the unique inneighbor of $c_2$, $(ii)~c_2$ is the unique inneighbor of $c_3$ and $(iii)~c_3$ is the unique outneighbor of $c_2$, then do the following. 
  \begin{itemize}
  \item  For any~$v_1 \in V_{c_1}$ and $v_3 \in V_{c_3}$ such that there exists a path from~$v_1$ to~$v_3$ in $G$, create an arc $(v_1, v_3)$ and set~$w'(v_1, v_3) = \pi(v_1, v_3)$. 
  \item Add a vertex $v^*$ of color $c_3$ and, for any vertex $v_1 \in V_{c_1}$ that has at least one outneighbor of color~$c_2$ in $G$, add the arc $(v_1, v^*)$ and set $w'(v_1, v^*)$ to the highest weighted outgoing arc from $v_1$ to any vertex of color $c_2$ in $G$.
  \item Remove all vertices of~$V_{c_2}$ from~$G'$.
  \end{itemize}
\end{rrule}

\begin{lemma}
\label{lem:delete-lines}
Reduction Rule~\ref{rule:unique-inneighbor} is correct and can be performed exhaustively in polynomial time.
\end{lemma}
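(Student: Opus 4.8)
The plan is to prove the two assertions separately: that a single application runs in polynomial time and exhaustive application terminates after polynomially many rounds, and that the rule preserves the maximum weight of a colorful arborescence rooted in $r$. For the running time I would first isolate the structural fact that keeps everything local: because $c_1$ is the unique inneighbour of $c_2$ and $c_2$ the unique inneighbour of $c_3$ in the DAG $\h$, the only $\h$-path from $c_1$ to $c_3$ is $c_1\to c_2\to c_3$; hence every path in $G$ from a vertex $v_1\in V_{c_1}$ to a vertex $v_3\in V_{c_3}$ has the form $v_1\to v_2\to v_3$ with $v_2\in V_{c_2}$. Thus $\pi(v_1,v_3)=\max_{v_2}(w(v_1,v_2)+w(v_2,v_3))$ and the weight put on the arc into $v^*$ equals $\max_{v_2}w(v_1,v_2)$; both maxima, and the test whether a triple satisfies $(i)$--$(iii)$, are computable in polynomial time, so one application is polynomial. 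Exhaustiveness follows because each application deletes the colour $c_2$ (contracting the length-two path $c_1\to c_2\to c_3$ into the arc $c_1\to c_3$, which keeps $\h$ a DAG and hence the instance valid), so at most $|\c|$ applications occur.

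For correctness I would show that the optimal weights in $G$ and in the reduced graph $G'$ coincide, via two opposite inequalities. Given any colourful arborescence $T$ in $G$: if $T$ avoids $V_{c_2}$ then, since a $c_3$-vertex is reachable from $r$ only through a $c_2$-parent and the subtree below any $c_3$-vertex never involves $c_2$ (as $c_2\notin\h^+(c_3)$), the whole of $T$ lies in the part of $G$ left untouched by the rule and survives verbatim in $G'$. Otherwise $T$ contains a unique $v_2\in V_{c_2}$ with a unique parent $v_1\in V_{c_1}$; colourfulness together with the fact that $c_3$ is the only outneighbour of $c_2$ forces $v_2$ to be either a leaf or to have a single child $v_3\in V_{c_3}$. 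In the leaf case I replace the arc $(v_1,v_2)$ by $(v_1,v^*)$; otherwise I replace the subpath $v_1\to v_2\to v_3$ by the arc $(v_1,v_3)$. Each surgery yields a colourful arborescence of $G'$ of weight at least $w(T)$, because the new arc weight is defined as a maximum dominating the weight actually used by $T$.

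Conversely, for any colourful arborescence $T'$ in $G'$ I would lift it back to $G$ without losing weight. In $G'$ every inneighbour of $v^*$ and of every real $v_3\in V_{c_3}$ is a $c_1$-vertex along an arc created by the rule, so $T'$ occupies the colour $c_3$ (through a genuine $c_3$-vertex or through $v^*$) only via one such arc. If $T'$ uses $(v_1,v_3)$ I reinsert the $c_2$-vertex attaining $\pi(v_1,v_3)$, turning the arc into $v_1\to v_2\to v_3$; if $T'$ uses $(v_1,v^*)$ I delete $v^*$ and hang the $c_2$-vertex attaining $\max_{v_2}w(v_1,v_2)$ as a leaf of $v_1$; if $T'$ uses neither, all its vertices and arcs are already present in $G$. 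In every case I obtain a colourful arborescence of $G$ of weight exactly $w(T')$, since the reinserted vertex realises the defining maximum and colourfulness is preserved because $c_2\notin\col(T')$. The two inequalities give equality of the optimal weights, which is the correctness of the rule.

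The main obstacle, and where I would spend the most care, is the bookkeeping ensuring that these local surgeries preserve colourfulness, connectivity and rootedness at once: the structural claim that in $G'$ a $c_3$-slot can be entered only through the freshly created arcs from $V_{c_1}$ is exactly what makes the deletion of the single intervening colour $c_2$ sound, and the dummy sink $v^*$ of colour $c_3$ must be handled so that selecting it faithfully encodes ``stopping the fragmentation at $c_2$'' and, by colourfulness, rules out the simultaneous use of a genuine $c_3$-subtree. One should also check that the root is never among the deleted vertices $V_{c_2}$, which holds because $\col(r)$ is a source of $\h$ and is therefore distinct from $c_2$ and $c_3$.
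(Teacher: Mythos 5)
Your proof is correct and follows essentially the same strategy as the paper's: a case analysis on whether the arborescence uses a vertex of color $c_2$ as a leaf, as an internal vertex with a $c_3$-child, or not at all, followed by the corresponding local replacement by the arc to $v^*$ or the contracted arc $(v_1,v_3)$. You are in fact somewhat more thorough than the paper, which only spells out the forward direction and leaves the lifting-back argument, the structural fact that every $G$-path from $V_{c_1}$ to $V_{c_3}$ has length exactly two, and the preservation of the DAG property of $\h$ to "symmetric arguments" and the reader.
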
 

\begin{proof} 
We first prove that our transformation is correct. We show only the direction that an arborescence of weight~$W$ in the original instance implies an arborescence of weight at least~$W$ in the new instance; the converse direction can be shown by symmetric arguments. Let $T = (V_T, A_T)$ be a colorful arborescence of weight $W$ in the original instance. First, if $T$ does not contain a vertex of color $c_2$, then $T$ is an arborescence of the new instance. Second, if $T$ contains a vertex $v_2$ of color $c_2$ whose inneighbor is $v_1$ in $T$ and if $T$ does not contain any vertex of color $c_3$, then setting $V_{T'} := V_T \setminus \{v_2\} \cup \{v^*\}$ and $A_{T'} := A_T \setminus \{(v_1, v_2)\} \cup \{(v_1, v^*)\}$ gives an arborescence~$T'=(V_{T'},A_{T'})$ of the new instance. Moreover, $w(T) = w'(T')$ since $w(v_1, v_2) = w'(v_1, v^*)$. Third, if $T$~contains a vertex~$v_2$ of color~$c_2$ whose inneighbor is~$v_1$ in $T$ and if $T$~contains a vertex~$v_3$ of color~$c_3$ (whose inneighbor is necessarily $v_2$), then setting $V_{T'} := V_T \setminus \{v_2\}$ and~$A_{T'} := A_T \setminus \{(v_1, v_2), (v_2, v_3)\} \cup \{(v_1, v_3)\}$ gives an arborescence~$T'=(V_{T'},A_{T'})$ of the new instance. Moreover, $w(T) = w'(T')$ since $w(v_1, v_2)+w(v_2,v_3) = w'(v_1, v_3)$. 

The polynomial running time follows from the fact that~$\pi(v_1, v_3)$ can be computed in polynomial time.
\end{proof}
To describe the final rule, let~$N_U^-(v)$ denote the set of unique colors in the inneighborhood of $v$ in $G$, where a color~$c$ is \textit{unique} if~$|V_c|=1$. Recall also that $\ell$ is the maximum number of vertices that do not belong to $T$ in $G$.
\begin{rrule}\label{rule:many-inneighbors}
  If an instance $(G, \c, col, w, r)$ of \mca contains a vertex~$v \in V$ such that $|N_U^-(v)| > \ell + 1$, then delete the $|N_U^-(v)| - \ell - 1$ least-weighted arcs from $N_U^-(v)$ to~$v$.
\end{rrule}
\begin{lemma}
\label{lem:inneighbors}
Reduction Rule~\ref{rule:many-inneighbors} is correct and can be performed exhaustively in polynomial time.
\end{lemma}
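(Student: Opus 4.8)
The plan is to show that deleting the low-weight arcs never destroys an optimal solution, by a local rerouting argument at~$v$. Since the rule only removes arcs, the reduced instance is a subgraph of the original, so every colorful arborescence of the reduced instance is also one of the original with the same weight and the same set of omitted vertices; this is the easy direction of the equivalence. It therefore suffices to prove the converse: if the original instance admits a colorful arborescence $T=(V_T,A_T)$ of weight $W$ omitting at most~$\ell$ vertices, then so does the reduced instance.

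First I would reduce to the only non-trivial case. An arc $(u,v)$ is deleted only if $u$ carries a unique color and $(u,v)$ is among the $|N_U^-(v)|-\ell-1$ least-weighted arcs from $N_U^-(v)$ to~$v$; in particular every deleted arc ends in~$v$, and $v$ has exactly one parent in~$T$. Hence if $v\notin V_T$, or if the parent of~$v$ in~$T$ reaches it through a kept arc (or through a non-unique color), then $T$ uses no deleted arc and is already a solution of the reduced instance. So assume the parent of~$v$ in~$T$ is the unique vertex $v_c$ of some color $c\in N_U^-(v)$ whose arc $(v_c,v)$ has been deleted.

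The heart of the argument is to exhibit a replacement parent. Among the $\ell+1$ \emph{kept} unique-color inneighbors of~$v$, at most $\ell$ can lie outside~$T$ because $T$ omits at most $\ell$ vertices; hence at least one such vertex $v_{c'}$ lies in~$V_T$, and its arc satisfies $w(v_{c'},v)\ge w(v_c,v)$ since it was kept while $(v_c,v)$ was deleted (breaking ties so that every kept arc weighs at least as much as every deleted one); note $v_{c'}\neq v_c$ as the colors are unique and the two arcs differ. I would then reroute: delete $(v_c,v)$ and insert $(v_{c'},v)$. This leaves the vertex set $V_T$ unchanged, so colorfulness and the bound on omitted vertices are preserved, while the weight changes by $w(v_{c'},v)-w(v_c,v)\ge 0$. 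The only thing left to verify is that the result is again an arborescence rooted at~$r$: removing $(v_c,v)$ splits $T$ into the component of~$r$ and the subtree $S_v$ rooted at~$v$; since $G$ is a DAG and $(v_{c'},v)$ is an arc, $v_{c'}$ precedes~$v$ in every topological order and so cannot lie in $S_v$, whence $v_{c'}$ is in the $r$-component and adding $(v_{c'},v)$ reconnects $S_v$ without creating a cycle. The resulting $T'$ is a colorful arborescence of weight $\ge W$ omitting at most $\ell$ vertices and using no deleted arc, as required.

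The main obstacle is precisely this existence-and-validity step: guaranteeing that a kept, no-lighter parent arc whose tail already lies in~$T$ exists, which is exactly where the slack $\ell+1$ in the rule and the bound on omitted vertices are used, and that the swap stays acyclic, which is where $\h$, and hence $G$, being a DAG is essential. For the running time, I would note that the unique colors, the sets $N_U^-(v)$, and the arc orderings are all computable in polynomial time; moreover a single application at~$v$ leaves exactly $\ell+1$ unique-color inneighbors, so the rule no longer triggers at~$v$, and since deleting arcs into~$v$ changes neither color multiplicities nor the inneighborhoods of the other vertices, one pass over all vertices applies the rule exhaustively in polynomial time.
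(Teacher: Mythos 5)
Your proof is correct and follows essentially the same exchange argument as the paper: since $T$ omits at most $\ell$ vertices, one of the $\ell+1$ kept unique-color inneighbors of $v$ must lie in $V_T$ and carries an arc at least as heavy as any deleted one, so the parent of $v$ can be rerouted without loss of weight. You additionally spell out why the reroute preserves the arborescence property (via the DAG/topological-order argument), a detail the paper's proof leaves implicit.
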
 
\begin{proof}
   Since~$|N_U^-(v)| > \ell + 1$, $T$~has to contain at least two vertices from $N_U^-(v)$. Now, let $v_1$ be a vertex from $N_U^-(v)$ such that $(v_1,v)$ is the least-weighted incoming arc from a unique color to $v$ in $G$. Even if $v_1$ belongs to $T$, there will always exist at least one other vertex $v_2$ that will also belong to $T$ and such that $w(v_1, v) \le w(v_2, v)$. Thus, we may assume that~$T$ does not contain the arc $(v_1, v)$ and safely delete it. The correctness of the rule now follows from repeated application of this argument.
\end{proof}
We are now ready to prove Theorem~\ref{thm:kernel-l-in2}.

\begin{proof}
 We first describe the kernelization process, then show that the obtained instance is bounded by a function of~$\nhs + \ell$.

 First, we iteratively reduce the input instance via Reduction
 Rules~\ref{rule:autonomous-color}--~\ref{rule:many-inneighbors}.  Let $(G, \c, \col, w, r)$
 denote the resulting instance which is equivalent and can be computed in polynomial time and
 let~$T = (V_T, A_T)$ be a solution of this instance.  First, we show that the indegree of any
 color in $\h$ is at most $(\ell +1)^2 + \ell$. This will allow us to show, in a second
 time, that $n_G$ is polynomially bounded by a function of $\nhs$ and $\ell$.

Let us first bound the indegree of any color in $\h$. Since $T$ is colorful and since $|V_T| = n_G - \ell$, there exists at most $\ell$ non-unique colors in $\c$ and hence the inneighborhood of any color $c \in V(\h)$ cannot contain more than $\ell$ non-unique colors in $\h$. Moreover, recall that the inneighborhood of any vertex $v \in V(G)$ cannot contain more than $\ell +1$ vertices of unique color in $G$, and that $T$ cannot be colorful if there exists more than $\ell +1$ occurrences of any color in~$G$. Hence, we may assume~$|V_c|\le c$. As a consequence, for any color $c \in V(\h)$, the inneighborhood of $c$ cannot contain more than $|V_c|\cdot(\ell +1)=(\ell +1)^2$ unique colors in $\h$, and hence~$c$ has at most~$(\ell +1)^2 + \ell$ inneighbors.

Now, let $F$ be the forest whose vertex set is $\c_F = \c \setminus X$ and which contains each arc $(c, c')$ of $\h$ such that $\{c, c'\} \subseteq \c_F$. In the following, we successively bound the maximum number of leaves of~$F$, the maximum number of vertices of~$F$, of~$V(\h)$ and finally of~$V(G)$ relatively to $\ell$ and $\nhs$. First, recall that there does not exist any autonomous color $c \in \c$ to which Reduction Rule~\ref{rule:autonomous-color} applies. Thus, each leaf $c$ of~$\h$ is in fact a difficult color. Consequently, every leaf of~$F$ is in~$\h$ an inneighbor of a difficult color.  Since the maximum indegree of any color in $\h$ is at most $(\ell + 1)^2 + \ell$, the number of leaves in~$F$ is bounded by $\nhs((\ell + 1)^2 + \ell)$. Now, by Lemma~\ref{lem:delete-lines}, $\h$ does not contain any color which has a unique inneighbor and a unique outneighbor. As a consequence, $F$ has no internal vertices of degree two that are not inneighbors of a difficult color. Hence, the number of nonleafs of~$F$ that are not inneighbors of a difficult color is~$\bigo(\nhs\cdot \ell^2)$, and thus $|V(F)| = \bigo(\nhs\cdot \ell^2)$. Moreover, since $\c_F = \c \setminus X$, we have that $|\c| \leq \nhs + \bigo(\nhs\cdot \ell^2)$. Finally, the number of vertices in~$G$ can exceed the number of colors in~$\h$ by at most~$\ell$. Therefore, $|V(G)|=\bigo(\nhs\cdot \ell^2)$ as claimed. 
\end{proof}

\end{document}